\theoremstyle{plain}
\newtheorem{thm}{\protect\theoremname}
\theoremstyle{plain}
\newtheorem{lem}[thm]{\protect\lemmaname}
\theoremstyle{plain}
\newtheorem{prop}{\protect\propositionname}
\theoremstyle{definition}
\providecommand{\corollaryname}{Corollary}
\providecommand{\lemmaname}{Lemma}
\providecommand{\theoremname}{Theorem}
\providecommand{\examplename}{Example}
\providecommand{\propositionname}{Proposition}
\providecommand{\claimname}{Claim}
\providecommand{\conjecturename}{Conjecture}
\begin{document}
\title{Bounds for Learning Lossless Source Coding}
\author{Anders H{\o}st-Madsen, \emph{Fellow, IEEE}\thanks{A. H{\o}st-Madsen is with the Department of Electrical Engineering, University of Hawaii, Manoa,
 Honolulu, HI, 96822, Email: ahm@hawaii.edu. 
 The research was funded in part by the NSF grant CCF-1908957.}}
\maketitle
\begin{abstract}
This paper asks a basic question: how much training
is required to beat a universal source coder?
Traditionally, there have been two types of source coders: fixed,
optimum coders such as Huffman coders; and universal source coders,
such as Lempel-Ziv
The paper
considers a third type of source coders: learned coders. These are
coders that are trained on data of a particular type, and then used
to encode new data of that type. This is a type of coder
that has recently become very popular for (lossy) image
and video coding.

The paper consider two criteria for performance of learned
coders: the average performance over training data, and
a guaranteed performance over all training except for
some error probability $P_e$. In both cases the coders
are evaluated with respect to redundancy.

The paper considers the IID binary case and binary Markov
chains. In both cases it is shown that the amount
of training data required is very moderate: to code sequences
of length $l$ the amount of training data required
to beat a universal source coder is $m=K\frac{l}{\log l}$,
where the constant in front depends the case considered.
\end{abstract}
%\begin{keywords}
%	Lossless source coding, universal source coding,
%	learning, Markov chains
%\end{keywords}

\section{Introduction}

Traditionally, there have been two types of source coders: fixed,
optimum coders such as Huffman coders; and universal source coders,
such as Lempel-Ziv \cite{ZivLempel77,ZivLempel78,CoverBook}. We will
consider a third type of source coders: learned coders. These are
coders that are trained on data of a particular type, and then used
to encode new data of that type. Examples could be source coders for
English texts, DNA data, or protein data represented as graphs.

In both machine learning and information theory literatures, there has
been some work on learned coding. From a machine learning perspective,
the paper \cite{SchmidhuberHeil96} stated the problem precisely and
developed and evaluated some algorithms. A few follow up papers, e.g.,
\cite{ZhouFu06,Kattan10,mahoney00,mahoney05,Cox16,Tatwawadi17} have
introduced new machine learning algorithms. 
For lossy coding, in particular of images and video, there has been
much more activity recently, initiated by the
 paper \cite{TodericiAl16} from Google, see for example
\cite{LiChen19,cheng2018deep,LiuLiAl20}. Our aim  is to find
theoretical bounds for how well it is possible to learn coding.
In this paper we will limit ourselves to lossless coding.

From an information theory perspective, Hershkovits and Ziv \cite{HershkovitsZiv97}
considered learned coding in terms of learning a database of sequences.
%;
%the paper \cite{FogelFeder18} can also be seen as a paper about learned
%coding. 
The results in \cite{HershkovitsZiv97} are quite pessimistic.
Basically they state that to code a sequence of length $l$ so as
to approach the entropy rate $\mathcal{H}$, a length $2^{l\mathcal{H}}$
training sequence is needed -- so that one observes most of the typical
sequences. This means that essentially learned coding is infeasible,
as the amount of training needed is exponential in the sequence length!
Yet, machine learning has shown itself to work very well in other
contexts with large, but not extreme training sets.
We will therefore consider the problem from a different perspective. 
% Our aim in this
%paper is to relate learning theory and coding theory to obtain
%a more nuanced theory.

Our perspective on learning coding is to compare with universal
source coders with redundancy as measure. The redundancy of
a coder is is the
difference between the entropy of a source and the average length
achieved by the coder. Suppose that the sources are (or assumed to be)
in some probability class $\Lambda$ characterized by a parameter
vector $\boldsymbol{\theta}$. For a universal source coder
with length function $L$, the redundancy to encode
a sequence of length $l$ is defined by \cite{Shamir06}
\[
R_{l}(L,\boldsymbol{\theta})=\frac{1}{l}E_{\boldsymbol{\theta}}[L(X^{l}]-H_{\boldsymbol{\theta}}(X)
\]
Since $\boldsymbol{\theta}$ is unknown, even in terms of probability law, 
usually the minimax redundancy is considered \cite{Shamir06}
\begin{align*}
R_{l}^{+} & =\min_{L}\sup_{\boldsymbol{\theta}}R_{l}(L,\boldsymbol{\theta})
\end{align*}
A good coder is one that achieves this minimum.

This setup can be generalized to learning.
We are given a training
sequence $x^{m}$; based on the training we develop 
coders $C(x^{l};x^{m})$ with length function $L(x^{l};x^{m})$.
The codelength is $\frac{1}{n}E_{\theta}[L(x^{l};x^{m})|x^{m}]$ (the
expectation here is only over $x^{l}$), and the redundancy is
\begin{equation}
R_{l}(L,x^{m},\theta)=\frac{1}{l}E_{\theta}[L(x^{l};x^{m})|x^{m}]-H_{\theta}(X) \label{eq:Rlearned_single}
\end{equation}
The redundancy depends on the training sequence $x^m$. \emph{One way} to
remove this dependency is to average also over $x^m$,
\begin{align}
R_{l}(L,m,\theta) & =\frac{1}{l}E_{\theta}[L(x^{l};x^{m})]-H_{\theta}(X)
\label{eq:Raverage}\\
R_{l}^{+}(m) & =\min_{L}\sup_{\theta}R_{l}(L,m,\theta) \label{eq:Rlearned}
\end{align}
The idea of learning to code is to obtain information about the distribution
of the source from the training $x^m$ and then apply this to code
the test sequence $x^l$. From a theoretical point of view 
of simply minimizing (\ref{eq:Rlearned}) one could
say that of course the coder can continue to learn about the source
from the test sequence. However, machine learning algorithms usually have a distinct learning
phase, and once the algorithm is trained, it is not updated with test samples.
We will therefore also consider this setup here, and we call such a
learned coder a \emph{frozen coder}. For the coders we consider this
is easy to specify. From the training sequence $x^m$ the coder estimates
a distribution $\hat P(\cdot)$ and this is then applied to test
sequences, without updating, through a Shannon/algebraic coder resulting in
$L(x^l)=-\log(\hat P(x^l))$ (ignoring a possible $+1$).

The question we consider now is: how many training samples do we need in order
to beat a universal source coder, i.e., how large should $m$ be so that
\begin{align}
  R_{l}^{+}(m) \leq R_{l}^{+} \label{eq:criterion}
\end{align}
One might of course want better performance than a universal coder.
But at least one would want the learned coder to do as well
as a universal coder, so (\ref{eq:criterion}) gives a baseline
on performance.
%Of course $m$ increases with $l$, so the setup is that during training,
%one aims to code sequences of length up to $l$. It is clear that
%if one fixes $m$ the learned coder will never approach entropy,
%whereas usually a universal coder will do so as $l\to\infty$. So,
%the comparison for a specific (or maximum) $l$ makes more sense.

Learning to code has similarities with universal prediction \cite{MerhavFeder98}. The paper \cite{Krichevskiy98}
developed bounds for universal prediction for IID (independent
identically distributed) sources,
and \cite{FalahatgarOrlitsky16,HaoOrlitskyAl18} for
Markov sources. In fact, the paper \cite{Krichevskiy98} exactly
considers (\ref{eq:Rlearned}), and proves
\begin{align}
  \frac{1}{2m\ln 2}+o\left(\frac 1 m\right) &\leq R_{l}^{+}(m)
  \leq \frac{\alpha_0}{m\ln 2}+o\left(\frac 1 m\right) \label{eq:RlIID} \\
  \alpha_0 &\approx 0.50922 \label{eq:alpha0}
\end{align}
The result was improved in \cite{BraessSauer04} to show
that
\begin{align}
  R_{l}^{+}(m) &= \frac{1}{2m\ln 2}+o\left(\frac 1 m\right)
  \label{eq:RlIIDexact}
\end{align}
%While not completely tight we can say that $R_{l}^{+}(m)\approx \frac{1}{2m\ln 2}$.
On the other hand, we also have good expressions for 
$R_{l}^{+}$ \cite{Shamir06}, which can be expressed as $R_{l}^{+}=\frac{\log l}{2l}+o\left(\frac 1 l\right)$. Thus, ignoring $o$-terms, (\ref{eq:criterion}) becomes
\begin{align}
  m\geq \frac{l}{\ln 2\log l} \label{eq:IIDmlimit}
\end{align}
The conclusion is that it is very easy to beat a universal coder.
For $l$ moderately large $\frac{l}{\ln 2\log l}<l$, so we need
fewer training samples than the length of the sequences we 
want to encode.

We now return to (\ref{eq:Raverage}). Averaging over the training
$x^m$ might be reasonable for universal prediction. But in
learning one usually learns once and applies many times. The average
codelength over test sequences in (\ref{eq:Rlearned_single}) is
therefore reasonable, but the averaging over the training less so.
As an alternative one could consider the worst case over $x^m$, but
one can always find totally non-informative training sequences (e.g.,
for the IID case the sequence of all zeros). Instead
one could require that the training is good for most training
sequences, or, put another way, that the probability of a bad
training sequence is low. So, we consider the criterion
\begin{align}
  E(m,a)=\sup_{\boldsymbol{\theta}}P\left(R_{l}(L,x^{m},\theta\right)>a)
  \label{eq:Ema}
\end{align}
For some given $a$ and small $P_e$ the goal is then to ensure
\begin{align*}
  E(m,a) \leq P_e
\end{align*}
Again, we can consider the bottom line of beating the universal
coder, in which case $a=\frac{\log l}{2l}$ for the IID case.

In this paper we consider the measure (\ref{eq:Ema}) for
the IID binary case, and we then generalize to binary
Markov chains for both the average measure (\ref{eq:Rlearned}) and
the error measure (\ref{eq:Ema}).

\section{IID Case}
We consider the IID binary case characterized by the
parameter $\theta=p$, where $p=P(X=1)$ and $q=1-p$. Learning an estimator boils
down to finding an estimator $\hat p$. It is then
well known \cite{CoverBook} that the redundancy of a coder defined
by $\hat p$ is $R_l(L,x^m,\theta)=D(p\|\hat p)$ (except for some
small constant), and therefore
\begin{align*}
	E(m,a) &=\sup_p P\left(D\left(p\|\hat{p}\right)\geq a\right)
\end{align*}

Consider first the
maximum likelihood estimator (MLE), $\hat p=\frac k m$, where
$k$ is the number of ones in the training sequence $x^m$. 
%We know the MLE is bad
%with respect to average performance, as $D\left(p\|\hat{p}\right)=\infty$ with non-zero probability. The following 
%proposition shows it
%is also bad in terms of error probability
If $k=0$, $D(p\|\hat p)=\infty$, and at the same time
$\lim_{p\to 0}P(k=0)=\lim_{p\to 0}(1-p)^m=1$; therefore
\begin{prop}\label{thm:MLE}
	For the MLE, $E(m,a)=1$ for all $a>0$
\end{prop}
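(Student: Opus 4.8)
The plan is to show that for any fixed $a>0$, the supremum over $p$ of $P(D(p\|\hat p)\geq a)$ equals $1$, and the cleanest way is to exhibit a single event whose probability can be driven to $1$ by taking $p$ small. The natural candidate is the event $\{k=0\}$, i.e. the training sequence contains no ones. First I would observe that on this event the MLE gives $\hat p=0$, and since $D(p\|\hat p)=p\log\frac{p}{\hat p}+q\log\frac{q}{\hat q}$ contains the term $p\log\frac{p}{0}=+\infty$ whenever $p>0$, we have $D(p\|\hat p)=\infty\geq a$ for every finite $a$. Hence the event $\{k=0\}$ is contained in the event $\{D(p\|\hat p)\geq a\}$, which immediately gives the lower bound $P(D(p\|\hat p)\geq a)\geq P(k=0)=(1-p)^m=q^m$.

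Next I would take the supremum over $p$. Since the bound $P(D(p\|\hat p)\geq a)\geq (1-p)^m$ holds for every $p\in(0,1)$, we have
\[
E(m,a)=\sup_p P\left(D\left(p\|\hat p\right)\geq a\right)\geq \sup_{p\in(0,1)}(1-p)^m=1,
\]
where the last equality follows because $(1-p)^m\to 1$ as $p\to 0^+$. Combined with the trivial upper bound $E(m,a)\leq 1$ (it is a supremum of probabilities), this forces $E(m,a)=1$ for all $a>0$, which is exactly the claim. The role of the $\lim_{p\to 0}(1-p)^m=1$ computation quoted in the excerpt is precisely to certify that the supremum is attained in the limit.

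There is essentially no hard step here; the argument is a one-line domination of the target event by $\{k=0\}$ followed by a limit. The only point requiring a little care is the convention that $D(p\|\hat p)=\infty$ when $p>0$ and $\hat p=0$, which must be stated explicitly so that the inequality $D(p\|\hat p)\geq a$ is genuinely satisfied (rather than being an indeterminate expression). I would also note for rigor that we need $p$ strictly positive to make the divergence infinite, but that poses no difficulty since we are taking $p\to 0^+$ rather than setting $p=0$: for each small but positive $p$ the event $\{k=0\}$ still has probability $(1-p)^m$ close to $1$ while the divergence on that event is already $+\infty$. This is what makes the supremum equal to $1$ without ever needing $p=0$ itself.
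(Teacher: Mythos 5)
Your proof is correct and follows exactly the paper's own argument: the paper establishes the proposition via the same observation that on the event $\{k=0\}$ the MLE gives $\hat p=0$, hence $D(p\|\hat p)=\infty$, while $\lim_{p\to 0}P(k=0)=\lim_{p\to 0}(1-p)^m=1$. Your write-up merely makes explicit the domination of events and the trivial upper bound $E(m,a)\leq 1$, which the paper leaves implicit.
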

%\begin{proof}
%We will show that
%$\lim_{p\to 0}P\left(D\left(p\|\hat{p}\right)\geq a\right)=1$. Let 
%$\hat p_-=p-\delta$. Then
%\begin{align*}
%  D(p\|\hat p) &=p\log \frac p{p-\delta} + (1-p)\log \frac{1-p}{1-p+\delta} \nonumber \\
%  &= -p\log\left(1- \frac\delta{p}\right) - (1-p)\log\left(1+ \frac{\delta}{1-p}\right) \nonumber \\
%  &\geq -p\log\left(1- \frac\delta{p}\right) - \delta
%%  &= -\epsilon\log\left(1- \frac\delta{\epsilon}\right) - \delta
%%    +o(\delta)
%\end{align*}
%%We solve $D(p\|\hat p)=a$ to get
%We now put
%\begin{align*}
%  \delta = p\left(1-2^{-(a+p)/p}\right)
%\end{align*}
%so that
%\begin{align*}
%  D(p\|\hat p)&\geq a+p-p\left(1-2^{-(a+p)/p}\right)\geq a
%\end{align*}
%Then
%\begin{align*}
%  P\left(D\left(p\|\hat{p}\right)\geq a\right) 
%  & \geq P(\hat p\leq \hat p_-) \nonumber\\
%  &\geq P\left(k \leq mp\left(1-2^{-(a+p)/p}\right)\right) \nonumber \\
%  &\geq p^{mp\left(1-2^{-(a+p)/p}\right)} \nonumber \\
%  &\to 1\quad\text{as }p\to 0
%\end{align*}
%\end{proof}
In light of Proposition \ref{thm:MLE} we consider 
other estimators $\hat p$. We assume that $\hat p=f(\check p)$,
where
\begin{align*}
  \check p=\frac k m
\end{align*}
is the minimal sufficient statistic; the function $f$ can
depend on $m$.
For convenience, we assume $f$ is invertible.
Let
\begin{align*}
  P(p,a)&=P\left(D(p\|\hat p)>a\right)
\end{align*}
for fixed $p\leq\frac 1 2$. The equation $D(p\|\hat p)=a$ has two
solutions $\hat p_\pm$ so that
\begin{align*}
  P(p,a)&=P\left(\check p< f^{-1}(\hat p_-)\right)+
  P\left(\check p> f^{-1}(\hat p_+)\right)
\end{align*}
the sum of the \underline{\emph{lower}} and \underline{\emph{upper}} tail probabilities.

We consider what can be named the moderate deviations regime. 
We fix $P_e$ independent of $m$
and require $E(m,a)\leq P_e$ and desire to find 
the smallest $a(m,P_e)$ that satisfies this inequality. We
solve the problem asymptotically as
$m\to\infty$; necessarily $a(m,P_e)\to 0$, and we want
to find how it converges to zero. This essentially gives
the redundancy as a function of $m$. We can use
this to determine how many training samples we need
to beat universal source coding: solving $a(m,P_e)<\frac{l}{2\log l}$.

Let $0\leq\lambda\leq 1$ and put
\begin{align*}
 \hat p_-(p,m,\lambda P_e) &= \inf \{\hat p: P\left(\check p< f^{-1}(\hat p)\right)\leq \lambda P_e\}	\nonumber\\
 \hat p_+(p,m,(1-\lambda)P_e) &= \sup \{\hat p: P\left(\check p> f^{-1}(\hat p)\right)\leq (1-\lambda) P_e\}
\end{align*}
Then we can write
\begin{align}
a(m, P_e) &= \min_\lambda \sup_p	\max\{D(p\| \hat p_-(p,m,\lambda P_e)),D(p\| \hat p_+(p,m,(1-\lambda)P_e))\} \label{eq:am}
\end{align}

For achievability, we consider estimators of the well-known
 form \cite{CoverBook,KrichevskyTrofimov81,Krichevskiy98}
\begin{align}
  \hat p &= \frac{k+\beta m}{m+2\beta m}=\frac{\check p+\beta}{1+2\beta}
  \label{eq:addbeta}
\end{align}
where, as we will see, for moderate deviations we can put 
$\beta = \frac\alpha m$, so that\footnote{The exact result
(\ref{eq:RlIIDexact}) was obtained by using a modified
additive estimator, but we will limit the consideration
here to the plain additive estimator.
}
\begin{align}
  \hat p &= \frac{k+\alpha}{m+2\alpha} \label{eq:addestimator}
\end{align}

The main result  is
\begin{thm}\label{thm:iid_binary}
	For estimators that are functions
	of the sufficient statistic and $P_e$ sufficiently small,
\begin{align}
  a(m,P_e) &\geq \frac{Q^{-1}(P_e/2)^2}{2m\ln 2}+o\left(\frac 1 m\right) \label{eq:ambounds}
\end{align}
The estimator (\ref{eq:addestimator}) has on optimum value
of $\alpha$ that satisfies
\begin{align}
 \frac 1 6 Q^{-1}(P_e/2)^2 -1\leq \alpha \leq \frac 1 6 Q^{-1}(P_e/2)^2 +1\label{eq:alpha_range}
\end{align}
which gives an achievable $a(m,P_e)$;
\begin{align}
  a(m,P_e) &= b(P_e)\frac{Q^{-1}(P_e/2)^2}{2m\ln 2}+o\left(\frac 1 m\right) \label{eq:ambounds}
\end{align}
where
\begin{align*}
  \lim_{P_e\to 0} b(P_e)=1
\end{align*}

\end{thm}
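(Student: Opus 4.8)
The plan is to push everything through a Taylor expansion of the divergence together with a Gaussian-plus-skewness (Edgeworth) description of the binomial, and to isolate a genuinely different \emph{boundary} regime $mp=O(1)$ where the binomial becomes Poisson. Write $\sigma=\sqrt{pq/m}$, $w_0=Q^{-1}(P_e/2)$, let $s=(\check p-p)/\sigma$ be the standardized statistic, and let $\gamma=\frac{1-2p}{\sqrt{mpq}}$ be its skewness. The two workhorses are the expansion of the divergence (in bits),
\[
D(p\|\hat p)=\frac{(\hat p-p)^2}{2pq\ln 2}-\frac{(1-2p)(\hat p-p)^3}{3p^2q^2\ln 2}+\cdots=\frac{1}{2m\ln 2}\Bigl(X^2-\tfrac{2\gamma}{3}X^3+\cdots\Bigr),\qquad X=\tfrac{\hat p-p}{\sigma},
\]
and the Edgeworth expansion $P(s\le z)=\Phi(z)-\frac{\gamma}{6}(z^2-1)\varphi(z)+o(\gamma)$, with $\varphi,\Phi$ the standard normal density and c.d.f. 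In the leading Gaussian picture a two-sided tail of total mass $P_e$ sits at $X=\pm w_0$, and since $\frac{(w_0\sigma)^2}{2pq\ln 2}=\frac{w_0^2}{2m\ln 2}$ does not depend on $p$, the leading constant is $1$; the whole theorem is about the $O(\gamma)$ and boundary corrections.

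\emph{Lower bound.} I start from (\ref{eq:am}). With $c_\pm$ the critical values of $\check p$ and $e_\pm=f(c_\pm)-p$, the quadratic term gives a per-$p$ redundancy $\frac{\max(e_-^2,e_+^2)}{2pq\ln 2}\ge\frac{(f(c_+)-f(c_-))^2}{8pq\ln 2}$ by $\max\ge$ average. The CLT gives $c_+-c_-=W\sigma$ with $W=Q^{-1}(\lambda P_e)+Q^{-1}((1-\lambda)P_e)$, and since $Q^{-1}$ is convex on $(0,\tfrac12)$ one has $W\ge 2w_0$ for every split $\lambda$. It remains to show that a single increasing $f$ cannot keep $(f(c_+)-f(c_-))/\sigma$ below $2w_0$ uniformly. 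Suppose $|e_\pm|<\eta\sigma$ for all $p$ with $\eta=w_0\sqrt{1-\delta}<w_0$; writing $\phi=f^{-1}$ and $h=\phi-\mathrm{id}$, the two inequalities force $h(p+\eta\sigma)>(b_+-\eta)\sigma$ and $h(p-\eta\sigma)<-(b_--\eta)\sigma$, where $b_\pm$ are the two critical deviations ($b_++b_-=W\ge 2w_0$). Choosing $p_1$ with $p_1-\eta\sigma(p_1)=p_0+\eta\sigma(p_0)$ evaluates $h$ at one point from both sides, and since $\sigma(p_0)\approx\sigma(p_1)$ in the bulk the incompatibility reduces to $W-2\eta>0$, which holds. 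The contradiction shows $\sup_p(f(c_+)-f(c_-))/\sigma\ge 2w_0(1-o(1))$, hence $a\ge\frac{w_0^2}{2m\ln 2}+o(1/m)$. The essential mechanism is that $\phi$ cannot lie both above and below the identity, the same obstruction that kills the MLE in Proposition \ref{thm:MLE}.

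\emph{Achievability and the optimal $\alpha$.} Substituting (\ref{eq:addestimator}) gives bias $\hat p-p\approx\frac{\alpha(1-2p)}{m}$, i.e. $X=s+\alpha\gamma$. The critical values are the Edgeworth quantiles $\zeta_\pm=\pm w_0+\frac{\gamma}{6}(w_0^2-1)$, so $X_\pm=\pm w_0+\gamma\kappa$ with $\kappa=\frac16(w_0^2-1)+\alpha$. Plugging into the divergence expansion, the two tail divergences are $\Psi_\pm:=2m\ln 2\,D(p\|\hat p_\pm)=w_0^2\pm 2w_0\gamma\bigl(\kappa-\tfrac{w_0^2}{3}\bigr)+O(\gamma^2)$, so that $\max(\Psi_-,\Psi_+)=w_0^2+2w_0|\gamma|\,|\kappa-\tfrac{w_0^2}{3}|+O(\gamma^2)$. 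The $O(\gamma)$ term—which collects the cubic part of $D$, the skewness shift of the quantiles, and the estimator bias—vanishes exactly when $\kappa=\frac{w_0^2}{3}$, i.e. $\alpha=\frac16(w_0^2+1)$, which lies in the window (\ref{eq:alpha_range}); this is the optimal-$\alpha$ claim, with the $\pm1$ slack left for the boundary optimization. For any fixed $p$ one then has $\gamma=O(m^{-1/2})$, so the additive estimator already achieves $a=\frac{w_0^2}{2m\ln 2}(1+o(1))$, matching the lower bound.

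\emph{The boundary, and the main obstacle.} The $O(\gamma^2)$ remainder is harmless for fixed $p$ but blows up as $p\downarrow 0$, and the Edgeworth expansion itself is invalid once $mp=O(1)$; this is exactly where $\gamma=O(1)$ and where $\sup_p$ is attained. There I replace the binomial by a Poisson law, writing $p=t/m$ and $D(p\|\hat p)\approx\frac1m D_{\mathrm{Pois}}(t\,\|\,k+\alpha)$ with $k\sim\mathrm{Pois}(t)$, so that $b(P_e)=\frac{2}{w_0^2}\,\min_\alpha\sup_t A(t,\alpha)$, where $A(t,\alpha)$ is the smallest $A$ with $P\bigl(D_{\mathrm{Pois}}(t\,\|\,k+\alpha)>A\bigr)\le P_e$. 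For $t\to\infty$ this returns $w_0^2/2$ (constant $1$), while the atom at $k=0$ (value $\to\alpha$) controls small $t$; optimizing $\alpha$ balances these into a finite worst-case $t$. As $P_e\to 0$ the binding deviations move out to where the Poisson skewness $t^{-1/2}$ is negligible against $w_0$, which forces $b(P_e)\to 1$. Carrying out this Gaussian-to-Poisson crossover uniformly in $p$, and thereby pinning down both $b(P_e)$ and the $O(1)$ correction to $\alpha$ in (\ref{eq:alpha_range}), is the delicate and decisive step; away from the boundary the bulk estimates already deliver the matching constant.
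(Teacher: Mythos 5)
Your lower bound is sound and in fact somewhat cleaner than the paper's: where the paper writes $f=\mathrm{id}+g_m$ and argues via limits of $\sqrt{m}g_m$ that $b=0$, $\lambda=\tfrac12$ is optimal, you exploit monotonicity of $f^{-1}$ directly (the map cannot sit $(b_+-\eta)\sigma$ above the identity and $(b_--\eta)\sigma$ below it at adjacent points) together with convexity of $Q^{-1}$, which is the same two ingredients packaged as a genuine contradiction argument. Your Edgeworth observation that the skewness term cancels exactly at $\alpha=\tfrac16\left(Q^{-1}(P_e/2)^2+1\right)$ is also a nice heuristic explanation of \emph{why} the optimal $\alpha$ scales as $\tfrac16 Q^{-1}(P_e/2)^2$, which the paper obtains only as an opaque Lambert-$W$ limit.

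However, there is a genuine gap, and you name it yourself: the entire achievability half of the theorem is deferred to the ``Gaussian-to-Poisson crossover,'' which you describe but do not carry out. For fixed $p$ your bulk estimate gives constant $1$, but the statement being proved concerns $\sup_p$, and the supremum is attained precisely at $p=\Theta(1/m)$ where your Edgeworth expansion is invalid ($\gamma=O(1)$) and where the constant is \emph{not} $1$ but $b(P_e)>1$. Everything that distinguishes the theorem from its CLT caricature lives there: (i) that $\sup_t A(t,\alpha)$ is finite and of the form $\tfrac12 b(P_e)w_0^2$, (ii) that the optimizing $\alpha$ lies in the window (\ref{eq:alpha_range}) (your skewness cancellation produces one candidate point inside the window, not the window itself, and a matching-to-the-bulk heuristic is not a proof that this choice controls the Poisson worst case), and (iii) that $b(P_e)\to1$ as $P_e\to0$. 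The paper spends most of its proof on exactly these points, using Short's normal-type bounds on the Poisson CDF, the explicit Lambert-$W$ solutions $r_\pm(\tilde\gamma)$ of $d(\tilde\kappa_\pm-\tilde\alpha_\pm,\tilde\gamma)=\tfrac12$, and a monotonicity/contradiction argument showing that if $\tilde\alpha_+\le\tfrac16$ then $d(\tilde\gamma,\tilde\kappa_+)$ can never exceed $\tfrac12$ (and symmetrically for the lower tail), with $\tfrac16$ emerging as $\lim_{\tilde\gamma\to\infty}f(\tilde\gamma)$. Your closing claim that ``the binding deviations move out to where the Poisson skewness is negligible, which forces $b(P_e)\to1$'' is an assertion of the conclusion, not an argument: the worst-case $t$ does not obviously escape to infinity as $P_e\to0$, and controlling the finite-$t$ bumps of $d(\tilde\gamma,\tilde\kappa_\pm)$ above $\tfrac12$ is precisely what the paper's uniform-in-$\tilde\gamma$ analysis accomplishes. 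Until that analysis (or a substitute for it) is supplied, the second and third claims of the theorem remain unproved.
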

\begin{proof}
We will first argue that we can focus on convergent sequences
in the proof technique.
We want to find the limit $\lim_{m\to\infty}ma(m,P_e)$ (implicitly
a $\limsup$). Let $p_{\max}(m)$ be a value of $p$ where 
$a(m,P_e)$ is achieved -- if there are multiple, we choose
one at random. Consider the sequence $mp_{\max}(m)$; it
has at least one accumulation point when $\infty$ is included.
Again we choose one a random, and a subsequence
$m'p_{\max}(m')$ that converges towards the accumulation
point. It is
clear that 
$\lim_{m'\to\infty}m'a(m',P_e)=\lim_{m\to\infty}ma(m,P_e)$.
We can therefore equivalently find the maximum
of $ma(m,P_e)$ along convergent sequences $mp(m)$. We can
divided such sequences into three regimes:
\begin{itemize}
  \item \emph{CLT regime}: $\lim_{m\to\infty} mp(m) =\infty$.
     In this regime the central limit theorem (CLT) can be
     applied.
  \item \emph{Poisson regime}: $0<\lim_{m\to\infty} mp(m) <\infty$. In this regime a Poisson approximation can be used.
  \item \emph{Sub-Poisson regime}: $\lim_{m\to\infty} mp(m) =0$. 
\end{itemize}
We consider the limit of $ma(m,P_e)$ in each of these regimes,
and maximizes over these limits. In the following
we will drop the explicit dependency $p(m)$ and just write $p$.

\emph{CLT Regime:} We can then use the 
central limit theorem, here for the upper tail,
\begin{align}
P\left(\check p> f^{-1}(\hat p_+)\right)
 &= P\left(\check p-p> f^{-1}(\hat p_+)-p\right) \nonumber\\
 &= P\left(\frac{\sqrt{m}}{\sqrt{pq}}(\check p-p)> \frac{\sqrt{m}}{\sqrt{pq}}( f^{-1}(\hat p_+)-p)\right) \nonumber\\
 &\to Q\left(\lim_{m\to\infty}\frac{\sqrt{m}}{\sqrt{pq}}( f^{-1}(\hat p_+)-p)\right) \quad \text{as }m\to\infty \label{eq:Plim}
\end{align}
(with $Q(x)=1-\Phi(x)$, $\Phi$ being the Gaussian CDF) as Berry-Esseen \cite{SerflingBook} gives
\begin{align}
  \left|P\left(\check p< f^{-1}(\hat p_+)\right)
   -\Phi\left(\frac{\sqrt{m}}{\sqrt{pq}} \left(f^{-1}(\hat p_+) -p\right)\right)\right|
  \leq 0.4748\frac{p^2+q^2}{\sqrt{mpq}} \label{eq:BE}
\end{align}
and the right hand side converges to zero by assumption.
We require
\begin{align*}
  P\left(\check p> f^{-1}(\hat p_+)\right)
  = Q\left(\frac{\sqrt{m}}{\sqrt{pq}}( f^{-1}(\hat p_+)-p)\right)
  +\epsilon(m) \leq (1-\lambda)P_e
\end{align*}
where $lim_{m\to\infty}\epsilon(m)=0$ from (\ref{eq:BE}).
We can include the gap in the inequality in $\epsilon(m)$, so
\begin{align*}
  Q\left(\frac{\sqrt{m}}{\sqrt{pq}}( f^{-1}(\hat p_+)-p)\right)
   = (1-\lambda)P_e+\epsilon(m)
\end{align*}
and thus
\begin{align*}
  \hat p_+ &= f\left(\frac {\sqrt{pq}}{\sqrt{m}}Q^{-1}((1-\lambda)P_e+\epsilon(m))+p\right)
\end{align*}
In the following we will omit the $\epsilon(m)$ as it
does not affect the results.

We will first consider a converse in the CLT regime, more 
specifically for $p$ constant rather than
a function of $m$. This
is clearly also a converse for all regimes.
We  use use Pinsker's inequality for relative entropy \cite{DragomirAl01},
\begin{align}
  D(p\|\hat p_+) &\geq \frac 2{\ln 2} (p-\hat p_+)^2
    \label{eq:Pinsker}
\end{align}
 in (\ref{eq:am})
% \footnote{Using (\ref{eq:BE}), $Q^{-1}((1-\lambda)P_e)$ is strictly speaking
% $Q^{-1}((1-\lambda)P_e+\epsilon(m))$, where 
% $\lim_{m\to\infty} \epsilon(m)=0$. This does not
% affect the limit, however, so we have omitted  it from
% the equations for compactness.}
\begin{align}
  a(m,P_e) \geq \frac 2{\ln 2}\min_{f}\min_\lambda\sup_p\max& \left\{ \left(f\left(\frac {\sqrt{pq}}{\sqrt{m}}Q^{-1}((1-\lambda)P_e)+p\right)-p\right)	^2,\right.\nonumber\\
  &\left.\left(f\left(-\frac {\sqrt{pq}}{\sqrt{m}}Q^{-1}(\lambda P_e)+p\right)-p\right)^2\right\} \label{eq:amlower}
\end{align}
Let $f(x)=x+g_m(x)$, where we have made explicit that $f$ can
depend on $m$. 
We can then write this as
\begin{align}
  a(m,P_e) \geq \frac 2{m\ln 2}\min_{f}\min_\lambda\sup_p\max& \left\{ \left(\sqrt{pq}Q^{-1}((1-\lambda)P_e)+\sqrt{m}g_m\left(\frac {\sqrt{pq}}{\sqrt{m}}Q^{-1}((1-\lambda)P_e)+p\right)\right)	^2,\right.\nonumber\\
  &\left.\left(-\sqrt{pq}Q^{-1}(\lambda P_e)+\sqrt{m}g_m\left(-\frac {\sqrt{pq}}{\sqrt{m}}Q^{-1}(\lambda P_e)+p\right)\right)	^2\right\} \label{eq:amlowerproof}
\end{align}
We will argue that $g_m=0$ and $\lambda=\frac 1 2$ is optimum, or more precisely that
$\lim_{m\to\infty}\sqrt{m}g_m=0$. 
Suppose that for some $p$, $\lim_{m\to\infty}\sqrt{m}g_m\left(\frac {\sqrt{pq}}{\sqrt{m}}Q^{-1}((1-\lambda)P_e)+p\right)=b$, so
that
\begin{align*}
  &\lim_{m\to\infty}\left(\sqrt{pq}Q^{-1}((1-\lambda)P_e)+\sqrt{m}g_m\left(\frac {\sqrt{pq}}{\sqrt{m}}Q^{-1}((1-\lambda)P_e)+p\right)\right)^2 \\
  &= \left(\sqrt{pq}Q^{-1}((1-\lambda)P_e)+b\right)^2
\end{align*}
Let $p_m$ be the solution to
\begin{align*}
  -\frac {\sqrt{p_mq_m}}{\sqrt{m}}Q^{-1}(\lambda P_e)+p_m=\frac {\sqrt{pq}}{\sqrt{m}}Q^{-1}((1-\lambda)P_e)+p.
\end{align*}
Then
\begin{align*}
  &\lim_{m\to\infty}\left(-\sqrt{p_mq_m}Q^{-1}(\lambda P_e)+\sqrt{m}g_m\left(\frac {\sqrt{p_mq_m}}{\sqrt{m}}Q^{-1}((1-\lambda)P_e)+p_m\right)\right)^2 \\
  &= \left(-\sqrt{pq}Q^{-1}(\lambda P_e)+b\right)^2
\end{align*}
Thus the maximum in (\ref{eq:amlowerproof}) as $m\to\infty$ becomes
\begin{align}
  \max\left\{\left(\sqrt{pq}Q^{-1}((1-\lambda)P_e)+b\right)^2,\left(-\sqrt{pq}Q^{-1}(\lambda P_e)+b\right)^2\right\} \label{eq:limmax}
\end{align}
Since there is a minimization over $f$ and $\lambda$, we
can choose $\lambda$ and $b$. It is now easily seen that
(\ref{eq:limmax}) is minimized for $\lambda=\frac 1 2$ and
$b=0$ as follows. We can assume that $P_e<\frac 1 2$ so that
$Q^{-1}(P_e)>0$. For $b=0, \lambda=\frac 1 2$ the two parts of
the max are equal. If we make $b>0$ we must decrease $\lambda$ below $\frac 1 2$
to get $\left(\sqrt{p q}Q^{-1}((1-\lambda)P_e)+b\right)^2
<\left(\sqrt{p q}Q^{-1}(P_e/2)\right)^2$. But for such
$\lambda$, $\left(-\sqrt{p q}Q^{-1}(\lambda P_e)+b\right)^2
>\left(\sqrt{p q}Q^{-1}(P_e/2)\right)^2$ due to the convexity
of $Q^{-1}(x)$ for $x<\frac 1 2$.

%But for any $b$ it is impossible to choose $\lambda$ so that both
%$\left(\sqrt{p_\infty q_\infty}Q^{-1}((1-\lambda)P_e)+b\right)^2
%<\left(\sqrt{p_\infty q_\infty}Q^{-1}(P_e/2)\right)^2$ and
%$\left(-\sqrt{p_\infty q_\infty}Q^{-1}(\lambda P_e)+b\right)^2
%<\left(\sqrt{p_\infty q_\infty}Q^{-1}(P_e/2)\right)^2$,
%since for $\lambda>\frac 1 2$,
%$|Q^{-1}(\lambda P_e)-Q^{-1}(P_e/2)|<|Q^{-1}((1-\lambda) P_e)-Q^{-1}(P_e/2)|$.

Thus, in (\ref{eq:amlower}) the minimum is achieved for
$f$ the identity and $\lambda=\frac 1 2$, while the
maximum over $p$ is achieved for $p=q=\frac 1 2$. This
gives (\ref{eq:ambounds}) as a lower bound.

For achievability in the CLT regime we explicitly have
\begin{align}
 \hat p_+ &= \frac 1{1+2\frac{\alpha}{m}} \left(	\frac {\sqrt{pq}}{\sqrt{m}}Q^{-1}(P_e/2)+p+\frac{\alpha}{m}\right) \nonumber \\
 \hat p_- &= \frac 1{1+2\frac{\alpha}{m}} \left(	-\frac {\sqrt{pq}}{\sqrt{m}}Q^{-1}(P_e/2)+p+\frac{\alpha}{m}\right) \label{eq:pCLT}
\end{align}
Notice that in general $D(p_1\|p_2)$ is and
increasing function of the distance $|p_1-p_2|$, and therefore
$D(p\|\hat p_\pm)$ for fixed $m$ are increasing functions
of $p$ for $p\in(0,\frac 1 2)$ as the distance $|p-\hat p_\pm|$
is given by $\frac{\sqrt{pq}}{\sqrt{m}}Q^{-1}(P_e/2)\pm\frac\alpha m$. We
can therefore obtain the worst case achievable $a(m,P_e)$ by series
expansion of $D(p\|\hat p_\pm)$ for $p=\frac 1 2$,
\begin{align*}
  D(p\|\hat p_\pm) &= \frac{(p-\hat{p}_\pm)^{2}}{pq\ln4}
  +o\left((p-\hat{p}_\pm)^{2}\right)
\end{align*}
which when inserting (\ref{eq:pCLT}) achieves (\ref{eq:ambounds}).

\emph{Sub-Poisson regime}: In this regime $p=o(\frac 1 m)$. Since $\hat p_-<p$, also
$\hat p_-=o(\frac 1 m)$. The lower tail probability is
\begin{align*}
  P\left(\check p< f^{-1}(\hat p_-)\right) &=
  P\left(\check p< \hat p_-\left(1+2\frac\alpha m\right)-\frac\alpha m\right) \nonumber\\
  & = P\left(k<m\hat p_-\left(1+2\frac\alpha m\right)-\alpha\right)
\end{align*}
For $m$ sufficiently large, the right hand side becomes negative,
and therefore the probability zero. We therefore only have the
constraint
$P\left(\check p> f^{-1}(\hat p_+)\right)\leq P_e$. Write
\begin{align*}
  P\left(\check p> f^{-1}(\hat p_+)\right) &=
  P\left(\check p> \hat p_+ \left(1+2\frac\alpha m\right)-\frac\alpha m\right) \nonumber\\
  & = P\left(k>m\hat p_+\left(1+2\frac\alpha m\right)-\alpha\right)
\end{align*}
If $m\hat p_+\to 0$ the probability converges to one. 
So, let $\hat p_+=\frac{\alpha+\delta}{m}$ with $\delta>0$
arbitrarily small, so that
\begin{align}
  P\left(\check p> f^{-1}(\hat p_+)\right)
  & = P\left(k>(\alpha+\delta)\left(1+2\frac\alpha m\right)-\alpha\right) \label{eq:subP}
\end{align}
As $P(k=0)=(1-p)^m=\left(1-o\left(\frac 1 m\right)\right)^m\to 1$
as $m\to\infty$ the probability (\ref{eq:subP}) 
converges to zero, i.e., is less than $P_e$ for $m$ sufficiently
large so that the constraint is satisfied.
We bound relative entropy by $\chi^2$-distance, see e.g. \cite{DragomirAl01},
\begin{align}
  D(p\|\hat p_+)&\leq \frac{(p-\hat p_+)^2}{\hat p(1-\hat p_+)\ln 2} \nonumber\\
     &= \hat p_+\frac{(1-\frac p{\hat p_+})^2}{(1-\hat p_+)\ln 2} \nonumber\\
     &= \frac{\alpha+\delta}{m\ln 2} +o\left(\frac 1 m\right)
     \label{eq:subPoisson}
\end{align}
because $\frac p{\hat p_+}\to 0$.

\emph{Poisson regime}:
%We next consider the regime $p\sim \frac 1 m$. 
Let $p=\frac \gamma m$. We also set
$\kappa_\pm=m\hat p_\pm$. 
Then
\begin{align}
  D\left(\left.\frac\gamma m\right\|\frac{\kappa_\pm}m\right)
  &= \frac{\kappa_\pm-\gamma+\gamma\ln\gamma-\gamma\ln\kappa_\pm}
     {m\ln 2}+o\left(\frac 1 m\right) \label{eq:DPoisLim}
\end{align}
We define
\begin{align*}
  d(x,y)&=y-x+x\ln\frac x y
\end{align*}
Now
\begin{align*}
  P\left(\check p\leq f^{-1}(\hat p_-)\right) &=
  P\left(\check p\leq \hat p_-(1+2\beta)-\beta\right) \nonumber\\
  & = P\left(k\leq\kappa_-\left(1+\frac {2\alpha}m\right)-\alpha\right)\nonumber\\
  &\to \mathbb{P}_\gamma(\kappa_--\alpha)
\end{align*}
where $\mathbb{P}_\gamma$ is the Poisson CDF. Similarly
\begin{align*}
  P\left(\check p> f^{-1}(\hat p_-)\right) 
  &\to 1-\mathbb{P}_\gamma(\kappa_+-\alpha)
\end{align*}
A reminder about the meaning of $\kappa_\pm$: for every
$\gamma$, $\kappa_-=\sup\left\{\kappa:\mathbb{P}_\gamma(\kappa-\alpha)\leq \frac{P_e}{2}\right\}$, and $\kappa_+=\inf\left\{\kappa:\mathbb{P}_\gamma(\kappa-\alpha)\geq 1-\frac{P_e}{2}\right\}$
Figure \ref{fig:proof} illustrates the proof.
\begin{figure}[hbt]
  \center\includegraphics[width=3.5in]{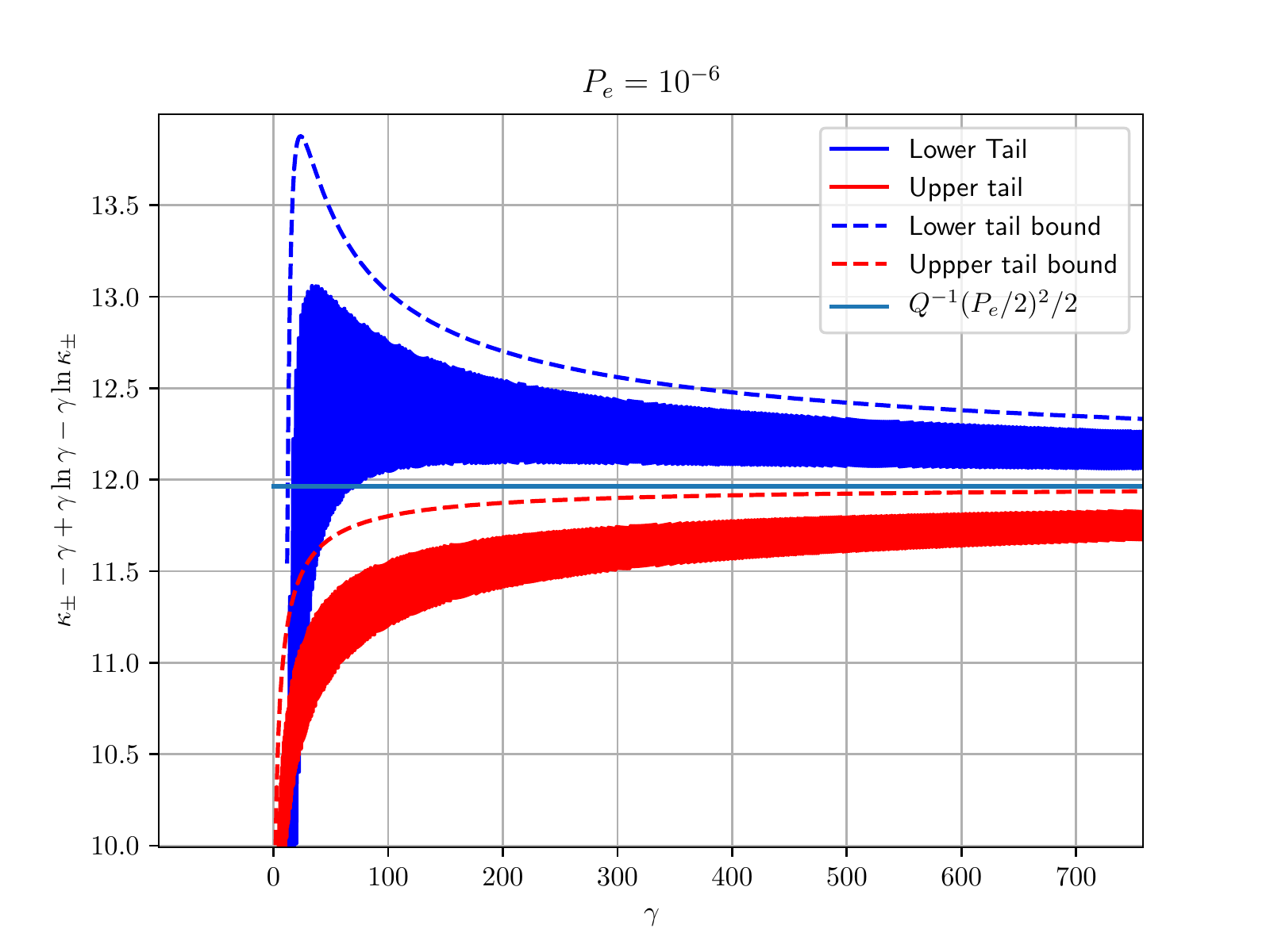}
  \caption{\label{fig:proof}Plot of $d(\tilde\gamma,\tilde\kappa_-),d(\tilde\gamma,\tilde\kappa_+)$ for $P_e=10^{-6}$ for 
  $\alpha=\frac 1 6Q^{-1}(P_e/2)^2-1$. The solid
  curves are for the exact values of $\kappa_\pm$,
  while the dashed curves are the bounds. The solid curves
  are sawtooth like, but this cannot be seen at the scale of
  the figures. The bounds are for the peaks of the solid curves.}
\end{figure}

We will first analyze the lower tail probability
corresponding to $\kappa_-$. 
%For $\gamma$ small, 
%$\mathbb{P}_\gamma(0)=e^{-\gamma}>\frac{P_e}{2}$, and
%$\kappa_-=\alpha$, and
%\begin{align*}
%  d(\gamma, \kappa_-) &= \alpha-\gamma+\gamma\ln\left(\frac\gamma \alpha\right) \nonumber\\
%  &\leq\max\left\{\alpha, \alpha+\ln(P_e/2)-\ln(P_e/2)\ln\left(\frac{-\ln(P_e/2)} \alpha\right)\right\}
%\end{align*}
%Next, if $\mathbb{P}_\gamma(0)\leq \frac{P_e}{2} < \mathbb{P}_\gamma(1)$,
%$\kappa_- = \alpha+1$ and
%\begin{align*}
%  d(\gamma, \kappa_-) &= \alpha + 1-\gamma+\gamma\ln\left(\frac\gamma {\alpha+1}\right) 
%\end{align*}
%Finally, if $\mathbb{P}_\gamma(1)\leq \frac{P_e}{2}$,
%$P_\gamma(\kappa_--\alpha-1)\leq \frac{P_e}{2}$ with
%$\kappa_--\alpha-1\geq 1$, and therefore \cite{Short13}
%so that
%\begin{align*}
%  d(\kappa_--\alpha-1, \gamma) &< \frac 1 2\Phi^{-1}(P_e/2)^2
%   = \frac 1 2Q^{-1}(P_e/2)^2 \nonumber\\
%  \gamma-(\kappa_--\alpha-1)+ (\kappa_--\alpha-1)\ln\frac{\kappa_--\alpha-1}\gamma &< \frac 1 2Q^{-1}(P_e/2)^2
%\end{align*}
Let $\gamma_k$, $k=0,1,\ldots$ be the sequence of solutions
 $\mathbb{P}_{\gamma_k}(k)= \frac{P_e}{2}$ -- these
 correspond to the peaks in the solid blue
 curve in Fig. \ref{fig:proof}. Notice that if
$\gamma_{k-1}<\gamma\leq\gamma_k$, $|\gamma-k|\leq|\gamma_k-k|$,
and this is also true for other distance measures. Now,
according to \cite{Short13}\footnote{While \cite{Short13} states the bounds for $k=1,2,\ldots$, it is easy to see that
the bounds are also valid for $k=0$, and the upper bound
is also valid or non-integer values of $k$.},
\begin{align*}
  \frac{P_e}2 = \mathbb{P}_{\gamma_k}(k)
  < \Phi\left(\text{sign}(k+1-\gamma_k)\sqrt{2d(k+1,\gamma_k)}\right)
\end{align*}
so that
\begin{align*}
  \frac 1 2\Phi^{-1}(P_e/2)^2 > d(k+1,\gamma_k)
\end{align*}
(since $k+1<\gamma_k$).
As $\kappa_-=k+\alpha$, we therefore have
\begin{align*}
  d(\kappa_--\alpha+1, \gamma) &< \frac 1 2\Phi^{-1}(P_e/2)^2
   = \frac 1 2Q^{-1}(P_e/2)^2 \nonumber\\
  \gamma-(\kappa_--\alpha+1)+ (\kappa_--\alpha+1)\ln\frac{\kappa_--\alpha+1}\gamma &< \frac 1 2Q^{-1}(P_e/2)^2
\end{align*}
by normalizing by $Q^{-1}(P_e/2)^2$ we get
\begin{align}
  d(\tilde\kappa_--\tilde\alpha_-,\tilde\gamma)=\tilde\gamma-(\tilde\kappa_--\tilde\alpha_-)+ (\tilde\kappa_--\tilde\alpha_-)\ln\frac{\tilde\kappa_--\tilde\alpha_-}{\tilde\gamma} &< \frac 1 2 \label{eq:gameq1}
\end{align}
where specifically $\tilde\alpha_-=\frac{\alpha-1}{Q^{-1}(P_e/2)^2}$. 
From (\ref{eq:DPoisLim}) it can be seen that $a(m,P_e)$ is
determined by the swapped relative entropy,
$d(\tilde\gamma,\tilde\kappa_-)$.
Solving (\ref{eq:gameq1}) with equality we get
\begin{align}
  \tilde\kappa_- -\tilde\alpha_- &= r_-(\tilde\gamma)\tilde\gamma=\frac{\frac 1{2\tilde\gamma}-1}
     {W_{-1}\left(\frac 1 e\left(\frac 1{2\tilde\gamma}-1\right)\right)}
     \tilde\gamma \label{eq:kmsolution}
%     \nonumber\\
%    &\leq \frac{\frac 1{2\gamma}-1}
%     {-1-\sqrt{-2\ln\left(1-\frac 1{2\gamma}\right)}+\ln\left(1-\frac 1{2\gamma}\right)}
%     \gamma
\end{align}
where $W_{-1}$ is the Lambert $W$-function of order $-1$.

For the upper tail probability, we instead
define $\lambda_k$, $k=0,1,\ldots$ as the sequence of
solutions $\mathbb{P}_{\lambda_k}(k)=1-\frac{P_e}2$.
We use the lower bound from \cite{Short13},
\begin{align*}
  1-\frac{P_e}2 &= \mathbb{P}_{\gamma_k}(k)
  > \Phi\left(\text{sign}(k-\gamma_k)\sqrt{2d(k,\gamma_k)}\right)
  \nonumber\\
    d(k,\gamma_k)&<\frac 1 2\Phi^{-1}(1-P_e/2)^2=\frac 1 2Q^{-1}(P_e/2)^2
\end{align*}
%We therefore end up with same equations as
%(\ref{eq:gameq1}-\ref{eq:gameq2}). 
We then have
\begin{align}
  \tilde\gamma-(\tilde\kappa_+-\tilde\alpha_+)+ (\tilde\kappa_+-\tilde\alpha_+)\ln\frac{\tilde\kappa_+-\tilde\alpha_+}{\tilde\gamma} &< \frac 1 2 \label{eq:gameq1p}
\end{align}
where now  $\tilde\alpha_+=\frac{\alpha+1}{Q^{-1}(P_e/2)^2}$. 
The solution of (\ref{eq:gameq1p}) with equality is
\begin{align}
  \tilde\kappa_+ -\tilde\alpha_+ &= r_+(\tilde\gamma)\tilde\gamma =\frac{\frac 1{2\tilde\gamma}-1}
     {W_{0}\left(\frac 1 e\left(\frac 1{2\tilde\gamma}-1\right)\right)}
     \tilde\gamma \label{eq:kpsolution}
\end{align}

The problem is now reduced to finding
\begin{align}
  \sup_{\gamma>0}\max\{d(\tilde\gamma,\tilde\kappa_-),d(\tilde\gamma,\tilde\kappa_+)\}
  &\stackrel{\bigtriangleup}{=} \frac 1 2b(P_e)\label{eq:bPe} \\
  d(\tilde\kappa_--\tilde\alpha_-,\tilde\gamma)&=\frac 1 2 \nonumber\\
  d(\tilde\kappa_+-\tilde\alpha_+,\tilde\gamma)&=\frac 1 2 \nonumber
\end{align}
We notice that $d(\tilde\gamma,\tilde\kappa_-)$ is decreasing
with $\alpha$ while $d(\tilde\gamma,\tilde\kappa_+)$
is increasing. We will first show that if 
$\tilde\alpha_+\leq\tilde\alpha_+^*$ for some $\tilde\alpha_+^*$,
then $\sup_{\gamma} d(\tilde\gamma,\tilde\kappa_+)\leq \frac 1 2$.
%First notice that (\ref{eq:gameq1p}) implies that
%\begin{align*}
%  (\tilde\kappa_+-\tilde\alpha)-\tilde\gamma+ \tilde\gamma\ln\frac{\tilde\gamma}{\tilde\kappa_+-\tilde\alpha} &< \frac 1 2 
%\end{align*}
%and therefore that
%\begin{align}
%  \lim_{\tilde\gamma\to\infty}\tilde\kappa_+-\tilde\gamma+\tilde\gamma\ln\frac{\tilde\gamma}{\tilde\kappa_+} \leq \frac 1 2 \label{eq:dlimt}
%\end{align}

By inserting (\ref{eq:kpsolution}) in 
$d(\tilde\gamma,\tilde\kappa_+)$
we find that (Mathematica)
\begin{align}
  \lim_{\tilde\gamma\to\infty}\tilde\kappa_+-\tilde\gamma+\tilde\gamma\ln\frac{\tilde\gamma}{\tilde\kappa_+} = \frac 1 2 \label{eq:dlimt}
\end{align}
We will now prove that if $\tilde\alpha_+\leq\tilde\alpha_+^*$ \emph{and}
 $d(\tilde\gamma,\tilde\kappa_+)>\frac 1 2$, then
 $d(\tilde\gamma,\tilde\kappa_+)$ is increasing in
$\tilde\gamma$, which would then contradict the limit
(\ref{eq:dlimt}).

Let $\tilde\kappa_+=f(\tilde\gamma)$ be the solution (\ref{eq:kpsolution}). The derivative of 
$d(\tilde\gamma,\tilde\kappa_+)$ is
\begin{align*}
f'(\gamma)+\ln\frac{\tilde\gamma}{f(\gamma)}+\tilde\gamma\frac{f'(\gamma)} {f(\tilde\gamma)}\geq 0
\end{align*}
where we want to show the inequality.
The implicit function theorem gives,
\begin{align*}
  f'(\tilde\gamma) &= \frac{\frac{\tilde\kappa+--\tilde\alpha_+}{\tilde\gamma}-1}{\ln \frac{\tilde\kappa+--\tilde\alpha_+}{\tilde\gamma}}
\end{align*}
Thus, to show that $d(\tilde\gamma,\tilde\kappa_+)$  is
increasing we have to show
\begin{align}
  \frac{\tilde\kappa_+ \left(\tilde\gamma \ln \left(\frac{\tilde\gamma}{\tilde\kappa_+}\right) \ln \left(\frac{\tilde\kappa_+-\tilde\alpha_+}{\tilde\gamma}\right)+(\tilde\kappa_+-\tilde\alpha_+)-\tilde\gamma\right)+\tilde\gamma (\tilde\gamma-(\tilde\kappa_+-\tilde\alpha_+))}{\tilde\gamma \tilde\kappa_+ \ln \left(\frac{\tilde\kappa_+-\tilde\alpha_+}{\tilde\gamma}\right)}\geq  0 \label{eq:diff1}
\end{align}
which reduces to
\begin{align*}
  \tilde\kappa_+ \left(\tilde\gamma \ln \left(\frac{\tilde\gamma}{\tilde\kappa_+}\right) \ln \left(\frac{\tilde\kappa_+-\tilde\alpha_+}{\tilde\gamma}\right)\right)\geq
  (\tilde\kappa_+-\tilde\gamma) (\tilde\gamma-(\tilde\kappa_+-\tilde\alpha_+)) 
\end{align*}
As stated above, we assume that
$\tilde\kappa_+-\tilde\gamma+\tilde\gamma\ln\frac{\tilde\gamma}{\tilde\kappa_+}\geq \frac 1 2$, or
$\tilde\gamma\ln\frac{\tilde\gamma}{\tilde\kappa_+}\geq \frac 1 2-(\tilde\kappa_+-\tilde\gamma)$. Inserting, we have
to prove
\begin{align*}
  \tilde\kappa_+ \left(\left(\frac 1 2-(\tilde\kappa_+-\tilde\gamma)\right) \ln \left(\frac{\tilde\kappa_+-\tilde\alpha}{\tilde\gamma}\right)\right)\geq
  (\tilde\kappa_+-\tilde\gamma) (\tilde\gamma-(\tilde\kappa_+-\tilde\alpha)) 
\end{align*}
From (\ref{eq:gameq1p}) with equality we have
\begin{align*}
  (\tilde\kappa_+-\tilde\alpha_+) \ln \left(\frac{\tilde\kappa_+-\tilde\alpha_+}{\tilde\gamma}\right)=
   (\tilde\kappa_+-\tilde\alpha_+)-\tilde\gamma+\frac{1}{2}
\end{align*}
or
\begin{align}
  \tilde\kappa_+ \ln \left(\frac{\tilde\kappa_+-\tilde\alpha}{\tilde\gamma}\right)=
   (\tilde\kappa_+-\tilde\alpha)-\gamma+\frac{1}{2}+\tilde\alpha \ln \left(\frac{\tilde\kappa_+-\tilde\alpha}{\tilde\gamma}\right) \label{eq:kappaeq}
\end{align}
then we have the following sequence of inequalities
\begin{align}
  \left(\left(\frac 1 2-(\tilde\kappa_+-\tilde\gamma)\right) \left((\tilde\kappa_+-\tilde\alpha_+)-\gamma+\frac{1}{2}+\tilde\alpha_+ \ln \left(\frac{\tilde\kappa_+-\tilde\alpha_+}{\tilde\gamma}\right)\right)\right)&\geq
  (\tilde\kappa_+-\tilde\gamma) (\tilde\gamma-(\tilde\kappa_+-\tilde\alpha_+)) \nonumber\\
    \left(\frac 1 2-(\tilde\kappa_+-\tilde\gamma)\right) \left(\frac{1}{2}+\tilde\alpha_+ \ln \left(\frac{\tilde\kappa_+-\tilde\alpha_+}{\tilde\gamma}\right)\right)&\geq
   \frac 1 2(\tilde\gamma-(\tilde\kappa_+-\tilde\alpha_+))\nonumber\\
    \left(\tilde\gamma-\left(\tilde\kappa_+-\frac 1 2\right)\right) \left(\frac{1}{2}+\tilde\alpha_+ \ln \left(\frac{\tilde\kappa_+-\tilde\alpha_+}{\tilde\gamma}\right)\right)&\geq
   \frac 1 2(\tilde\gamma-(\tilde\kappa_+-\tilde\alpha_+))\nonumber\\
    \left(\left(\tilde\kappa_+-\frac 1 2\right)-\tilde\gamma\right) \left(\frac{1}{2}+\tilde\alpha_+ \ln \left(\frac{\tilde\kappa_+-\tilde\alpha_+}{\tilde\gamma}\right)\right)&\leq
   \frac 1 2((\tilde\kappa_+-\tilde\alpha_+)-\tilde\gamma) \nonumber\\
    \tilde\alpha_+\left((r_+(\tilde\gamma)-1)\tilde\gamma+\tilde\alpha_+-\frac 1 2\right) \ln \left(r_+(\tilde\gamma)\right)&\leq
   \frac 1 2\left(\frac 1 {2}-\tilde\alpha_+\right)  \label{eq:derivupper}
\end{align}
($r_+$ is defined in (\ref{eq:kpsolution})).
This is a second order inequality in $\tilde\alpha_+$, which
can be solved to give $\tilde\alpha_+\leq f(\tilde\gamma)$, where
$f(\tilde\gamma)$ is a rather large expression that we will
not write down here. The function
$f(\tilde\gamma)$ is decreasing, and it can be shown (Mathematica)
that $\lim_{\tilde\gamma\to\infty}f(\tilde\gamma)=\frac 1 6$.

We now argue by contradiction. Let $\tilde\alpha_+\leq \frac 1 6$. If at
some time $\tilde\kappa_+-\tilde\gamma+\tilde\gamma\ln\frac{\tilde\gamma}{\tilde\kappa_+}> \frac 1 2$ 
then $\tilde\kappa_+-\tilde\gamma+\tilde\gamma\ln\frac{\tilde\gamma}{\tilde\kappa_+}$ is increasing, thus it stays strictly above $\frac 1 2$. But then the
limit (\ref{eq:dlimt}) cannot be achieved. Thus
we conclude that we must have
$\tilde\kappa_+-\tilde\gamma+\tilde\gamma\ln\frac{\tilde\gamma}{\tilde\kappa_+}\leq \frac 1 2$.

For the lower tail, the above argument can be repeated,
where we now have $\tilde\alpha_-\geq f(\tilde\gamma)$ with
$f(\tilde\gamma)$ the solution of (\ref{eq:derivupper}) with
$r_+(\tilde\gamma)$ replaced with $r_-(\tilde\gamma)$
given by (\ref{eq:kmsolution}). It turns out that in this
case also $\lim_{\tilde\gamma\to\infty}f(\tilde\gamma)=\frac 1 6$.

Since $\tilde\alpha_-<\tilde\alpha_+$ we cannot have both
$\tilde\alpha_-\geq \frac 1 6$ and $\tilde\alpha_+\leq \frac 1 6$.
However, we can choose $\alpha$ so that both 
$\lim_{P_e\to 0}\tilde\alpha_-=\lim_{P_e\to 0}\tilde\alpha_+=\frac 1 6$, and therefore in the limit
both $\sup_{\gamma>0}d(\tilde\gamma,\tilde\kappa_-),\sup_{\gamma>0}d(\tilde\gamma,\tilde\kappa_-)\leq \frac 1 2$.
This shows that $b(P_e)$ given by (\ref{eq:bPe}) converges
to 1 as $m\to\infty$.

To summarize: in the CLT regime we get
the achievable $\lim_{m\to\infty} ma(m,P_e)=\frac{Q^{-1}(P_e)}{2\ln 2}$, in the Poisson regime $\lim_{m\to\infty} ma(m,P_e)=b(P_e)\frac{Q^{-1}(P_e)}{2\ln 2}$ when
$\alpha$ chosen as (\ref{eq:alpha_range})
,
and in the sub-Possion regime (\ref{eq:subPoisson}) $\lim_{m\to\infty} ma(m,P_e)=\frac{\alpha+\delta}{\ln 2}$ with $\delta>0$ arbitrarily small,
which is smaller than $\frac{Q^{-1}(P_e)}{2\ln 2} $
with this choice of $\alpha$. Thus, the Poisson regime
gives the worst performance and results in (\ref{eq:ambounds}).
\end{proof}
The first thing to notice from this result is that as for average
performance, the performance increases as $\frac 1 m$. Specifically,
to beat universal coding of sequences of maximum
length $l$ with  probability $1-P_e$ the number of training
samples is approximately
\begin{align}
  m & \geq  \frac{Q^{-1}(P_e/2)^2}{2\ln 2}\frac l{\log l}
  \label{eq:mboundMarkov}
\end{align}
Comparing this with (\ref{eq:IIDmlimit}) we can see that $m$ still increases
as $\frac l{\log l}$, which is very moderate. However, the
factor in front can be large. Additionally,
the optimum value of $\alpha$ is different; for the
average case $\alpha\approx\frac 1 2$ (\ref{eq:alpha0}), while here
$\alpha\approx \frac 1 6 Q^{-1}(P_e/2)^2$, which can be
considerably larger. The purpose of $\alpha$ is
to avoid that rarely seen symbols require extremely long
codewords. We can therefore interpret this
so that to keep training error very small, protection
against long codewords is of even higher importance then for average performance.

The other thing to remark is that the upper and lower
bounds are only tight in the limit: they are separated
by a factor $b(P_e)$. This factor is determined by
how much the curves in Fig. \ref{fig:proof} "bump"
above the $\frac 1 2 Q^{-1}(P_e/2)$ line. The figure shows
a case where the upper tail probability stays completely
below the line, so that $b(P_e)$ is completely determined
by the bump of the lower tail probability. If one
increases $\alpha$, the lower tail bump becomes smaller and
the upper tail curve will develop a bump. Eventually at
the upper range of (\ref{eq:alpha_range}) the lower tail
probability will be completely below the line, and $b(P_e)$
will be determined completely by the bump on the upper tail
probability. It is clear that if one wants an accurate
value of $b(P_e)$ one can optimize over $\alpha$ in this range,
and one can even use the exact probability rather than the
bounds. But the numerical computation is not easy and might
become unstable for small values of $P_e$. Instead we
suggest to calculate an upper bound on $b(P_e)$ by choosing
$\alpha$ as the lower bound of (\ref{eq:alpha_range}) so that
$b(P_e)$ is determined by the bump of the lower tail,
which can be calculated by
\begin{align}
  b(P_e)&\leq 2\max_{\tilde\gamma>0}\tilde\kappa_--\tilde\gamma+\tilde\gamma\ln\tilde\gamma-\tilde\gamma\ln\tilde \kappa_-
  \label{eq:bPEu}
\end{align}
where $\tilde\kappa_-$ is given by (\ref{eq:kmsolution}),
explicitly
\begin{align*}
  \tilde\kappa_-  &= \frac{\frac 1{2\tilde\gamma}-1}
     {W_{-1}\left(\frac 1 e\left(\frac 1{2\tilde\gamma}-1\right)\right)}
     \tilde\gamma + \frac 1 6 - \frac 2{Q^{-1}(P_e/2)^2} %\nonumber\\
 %     \nonumber\\
%    &\leq \frac{\frac 1{2\gamma}-1}
%     {-1-\sqrt{-2\ln\left(1-\frac 1{2\gamma}\right)}+\ln\left(1-\frac 1{2\gamma}\right)}
%     \gamma
\end{align*}
This maximization in (\ref{eq:bPEu}) then easily
can be done numerically. A plot of this upper bound on $b(P_e)$ can be seen in Fig. \ref{fig:Markov} below.

\section{Extension to Markov Chains}
We consider a binary Markov with states $0,1$. Let $p_i=p(i|i)$ be
the probability of staying in state $i$ when the current state is $i$.
The stationary probability is
\begin{align*}
  \pi_i = \frac{p_i}{p_0+p_1}
\end{align*}
Based on training, estimates $\hat p_0$ and $\hat p_1$ are 
generated. The redundancy for coding then is \cite{CoverBook}
\begin{align*}
  \pi_0D(p_0\|\hat p_0)+\pi_1D(p_1\|\hat p_1)
\end{align*}
We consider the two measures of performance
\begin{align}
	R^+_l(m) &=\sup_{p_0,p_1} E\left[\pi_0D(p_0\|\hat p_0)+\pi_1D(p_1\|\hat p_1)\right] \label{eq:RmMarkov} \\
	E(m,a) &=\sup_{p_0,p_1} P\left(\pi_0D(p_0\|\hat p_0)+\pi_1D(p_1\|\hat p_1)\geq a\right) \label{eq:EmaMarkov}
\end{align}
As in the proof of Theorem \ref{thm:iid_binary}, we consider
convergent sequences $(p_0(m),p_1(m))$. From this we
also have a limit $\pi_i=\lim_{m\to\infty}\pi_i(m)$.
If $\pi_0=0$, only the term corresponding to $\pi_1$ in
(\ref{eq:RmMarkov}-\ref{eq:EmaMarkov}) matters; it is therefore reduced
to the iid case, which has better performance. We can therefore
assume that $\pi_i\neq 0$.

In \cite{FalahatgarOrlitsky16,HaoOrlitskyAl18} universal
prediction (called estimation) for Markov chains was considered,
as an extension of \cite{Krichevskiy98}. It was shown
that the estimation error decreases as $\frac{\log\log m}{m}$,
which is an interesting contrast to (\ref{eq:RlIID}).
However, for learned coding the redundancy does not
decrease at all with the length of the training sequence,
\begin{prop}
Assume that the training data consists of a single sequence.
Then
\begin{align*}
   R(m)&\geq \frac 1 2 \nonumber \\
   E(m,a)&=1\quad\text{for }a<\frac 1 2
\end{align*}
\end{prop}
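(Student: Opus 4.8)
The plan is to exploit the one feature that distinguishes a \emph{single} training sequence from the multi-sequence (prediction) setting: in a single path a state that is entered rarely is observed only a bounded number of times, no matter how large $m$ is, so its transition probability can never be resolved. I would formalize this along convergent parameter sequences $(p_0(m),p_1(m))$, exactly as in the proof of Theorem~\ref{thm:iid_binary}, and take state~$0$ to be nearly absorbing, $1-p_0(m)=o(1/m)$, with the chain started in state~$0$. Then the probability of ever leaving state~$0$ within the $m$ observed symbols is $1-(1-o(1/m))^{\,m-1}\to 0$, so with probability tending to $1$ the entire training sequence is the all-zero string. On this event the estimator receives \emph{no} information about $p_1$, yet it must still commit to some value $\hat p_1=s$ fixed by the coder, and the adversary is free to choose the true $p_1=t$ \emph{after} $s$ is fixed.

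The second ingredient is that the under-sampled state keeps non-negligible weight. With the stated $\pi_1=p_1/(p_0+p_1)$ and $p_0\to 1$ one has $\pi_1\to t/(1+t)$, which is bounded away from $0$ whenever $t$ is; this is precisely the step where the Markov problem refuses to collapse to the IID reduction, in which the analogous weight would vanish. The state-$0$ term is harmless, since $\sim m$ self-transitions and no exits let any reasonable $\hat p_0$ achieve $\pi_0 D(p_0\|\hat p_0)\to 0$. Hence on the probability-$\to 1$ all-zero event the redundancy equals $\pi_1 D(p_1\|\hat p_1)+o(1)$, and both claims reduce to the single, $m$-independent quantity
\[
  V=\min_{s}\ \sup_{0<t<1}\ \frac{t}{1+t}\,D\!\left(t\,\|\,s\right).
\]
Because the all-zero event has probability $\to 1$, the expectation in (\ref{eq:RmMarkov}) converges to its value there, giving $R(m)\to V$; and since the adversary can pick any $t$ with $\tfrac{t}{1+t}D(t\|s)\ge a$, the event $\{\text{redundancy}\ge a\}$ contains the all-zero event for $a<V$, so $\sup_{p_0,p_1}P(\cdot\ge a)\to 1$, i.e. $E(m,a)=1$ below the threshold $V$. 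This already delivers the qualitative content of the proposition: the redundancy is bounded below by a positive constant \emph{uniformly in $m$}, so no amount of additional training helps.

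The entire proposition therefore comes down to evaluating $V$ and showing it equals $\tfrac12$, and this constant-chasing is the main obstacle. The corner I would steer the adversary toward is $t\to 1$ (state~$1$ also nearly absorbing) with $s=\tfrac12$, where $\tfrac{t}{1+t}D(t\|s)\to\tfrac12\,D(1\|\tfrac12)=\tfrac12$; the delicate part is establishing that this corner, rather than some interior $(s,t)$, governs $V$, for which I would use monotonicity/convexity of $t\mapsto\tfrac{t}{1+t}D(t\|s)$ to push the maximizing $t$ to the boundary, and then argue that the minimizing coder cannot undercut $\tfrac12$. Two checks accompany this. First, one must rule out any advantage from the lower-probability sample paths that \emph{do} enter state~$1$: such a path exposes $p_1$ only through a single geometric run length, i.e. \emph{one} effective sample, so it cannot lower the worst-case risk below the single-sample value. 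Second, the hard structural identity $|n_{01}-n_{10}|\le 1$, valid for every single path, prevents the two transition-count estimators from being decoupled and refined independently, and I would invoke it to close the gap between the heuristic reduction above and the full minimax. If a direct evaluation of $V$ proves awkward, I would instead lower-bound it by a Le~Cam two-point argument, placing equal mass on two chains that share the nearly absorbing state-$0$ dynamics but differ in $p_1$ so that their laws agree up to $o(1)$ in total variation on the all-zero event, and then optimize the two points to drive the forced $\pi_1$-weighted divergence up to $\tfrac12$.
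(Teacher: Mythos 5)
Your proposal follows the paper's general idea---make the chain slow mixing so that the training sequence never leaves state $0$ and nothing is learned about $p_1$---but the asymmetric way you let the unseen state's parameter float breaks the argument at its central step. You keep state $0$ nearly absorbing ($1-p_0=o(1/m)$, staying probabilities) while $p_1=t$ is fixed, and you claim the unseen state retains weight $\pi_1\to t/(1+t)$, ``bounded away from $0$ whenever $t$ is.'' That expression comes from taking the paper's displayed formula $\pi_i=p_i/(p_0+p_1)$ literally, but it is not the stationary law of the chain you built: balancing probability flux between the states gives $\pi_1=\frac{1-p_0}{(1-p_0)+(1-p_1)}$, which tends to $0$ whenever $t<1$ is fixed and $1-p_0=o(1/m)$ (the paper's formula agrees with the true stationary law essentially only when $p_0=p_1$). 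In words: if state $0$ is far stickier than state $1$, the test sequence almost never visits state $1$, so ignorance of $p_1$ costs nothing; the step you describe as ``where the Markov problem refuses to collapse to the IID reduction'' is exactly backwards. The adversary is instead \emph{forced} to make the unseen state at least as sticky as the seen one, and the paper's proof implements this in the simplest way: it sets $p_1=p_0$, which pins $\pi_0=\pi_1=\frac 1 2$ under any reading of the (admittedly inconsistent) notation, takes $\hat p_1=\frac 1 2$ as the uninformed estimate, and gets $(1-p_0)^m\cdot\frac 1 2(1-H(p_0))\to\frac 1 2$ for each fixed $m$ as $p_0\to 0$---no minimax game over $(s,t)$ ever appears.

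Second, even if one grants your weight $t/(1+t)$, your reduction target $V=\min_s\sup_t\frac{t}{1+t}D(t\|s)$ does not equal $\frac 1 2$, so the ``constant-chasing'' you defer cannot be completed. Because the weight kills the $t\to 0$ corner, only $t$ near $1$ is dangerous, and the coder can hedge by choosing $s>\frac 1 2$: with $s=2^{-1/2}$ the corner $t\to 1$ yields $\frac 1 2\log_2(1/s)=\frac 1 4$, and a direct check shows the supremum over interior $t$ is also at most $\frac 1 4$, so $V\leq\frac 1 4<\frac 1 2$. Your plan to ``argue that the minimizing coder cannot undercut $\frac 1 2$'' is therefore unprovable---the coder can. (This hedging phenomenon is a genuine subtlety: once the weight is coupled to the parameters, a coder that defaults the unseen state to ``sticky,'' e.g.\ $\hat p_1$ within $O(1/m)$ of the absorbing endpoint, suffers only $O(\log m/m)$ against every slow-mixing alternative; the paper sidesteps this by \emph{asserting} that $\hat p_1=\frac 1 2$ is the minimax choice given no observations of state $1$, rather than minimizing over the coder's response inside its construction.) Your Le Cam fallback inherits the same defect: two alternatives that both keep $\pi_1$ bounded away from zero must both have $1-p_1=O(1-p_0)=o(1/m)$, and a sticky-default response to their mixture is within $O(1/m)$ of each, so the two-point bound produces $O(1/m)$, not $\frac 1 2$.
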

\begin{proof}
Let $p_1=p_0$. The probability that a training sequence
never changes state is then $(1-p_0)^m$. Let us assume
that the training sequence is in state $0$, and that it
uses this to generate a \emph{perfect} estimate $\hat p_0=p_0$.
Since no example of state 1 is seen, the best minimax
estimate of $p_1$ is $\hat p_1=\frac 1 2$. Then
\begin{align*}
  E\left[\pi_0D(p_0\|\hat p_0)+\pi_1D(p_1\|\hat p_1)\right] \geq 
  (1-p_0)^m\frac 1 2 (1-H(p_0))\to \frac 1 2\text{ as }p_0\to 0
\end{align*}
Similarly
\begin{align*}
  P\left(\pi_0D(p_0\|\hat p_0)+\pi_1D(p_1\|\hat p_1)\geq \frac 1 2 (1-H(p_0))\right)\geq (1-p_0)^m
\end{align*}
\end{proof}
The different behavior is easy to understand intuitively.
If the Markov chain is slow mixing, the training sequence
might see only one state. But the test sequence could
be from the other state, about which nothing has been
learned. On the other hand, in universal prediction, the
goal is to predict the next sample after a training sequence.
If the training sequence has seen only a single state,
there is a good chance the following sample is also from that state, so prediction is not too bad.
The point is that universal prediction and learned coding
are not quite equivalent.

From the above it is clear that multiple training sequences are required for learning how to code.
For achievability, 
we let the training data consist of $n$ sequences of length $l$.
We assume that each sequence 
has an initial state according to the stationary
distribution. Let $\hat m_i$ be the number of visits to state $i$;
this is of course random, but converges towards $\pi_i m$ by
the law of large numbers. In order to handle this randomness
in the estimators, we consider genie assisted estimators. 
For achievability we consider an estimator that is \emph{inhibited} by
the genie as follows. The genie fixes $m_i$ with $m_1+m_0\leq m-n$ and inhibits the estimation as follows
\begin{itemize}
  \item If there are more than $m_i$ visits to state $i$,
        it only uses the first $m_i$ ones to estimate $p_i$.
  \item If there is either less than $m_0$ visits to state $0$ or
        less than $m_1$ visits to state $1$, the genie generates
        more training data with the correct distribution until there is the correct number of
        visits; however, it also marks the sequence as invalid.
\end{itemize}
In either case we use the estimator (\ref{eq:addestimator})
to estimate
\begin{align}
  \hat p_i &= \frac{k_{ii}+\alpha}{m_i+2\alpha} \label{eq:addMarkov}
\end{align}
where $k_{ii}$ is the number of times the sequence
stays in state $i$ out of the $m_i$ visits to state $i$. 
This estimator is of course not realizable; a realizable
estimator is one that uses (\ref{eq:addMarkov}) based
on the actual number of visits $\hat m_i$. But the
way we have constructed the genie-inhibited estimator means
that the realizable estimator has at least as good performance.

For the converse, we assume that the genie makes
training data that has exactly
$m_i$ visits to state $i$ with $m_1+m_0=m$. We optimize the estimator
over $m_1,m_0$; thus there is no assumption that sequences
start according to the stationary distribution.
 
Denote by $E_2$ the event that the genie marks
a sequence as invalid. We now bound $P(E_2)$. Let
\begin{equation}
  m_i = (\pi_i-\epsilon) (m-n) \label{eq:mi}
\end{equation}
 and let
$\hat m_i$ be the actual number of visits to state $i$.
Then
\begin{align}
  P(E_2) & = P(\hat m_0 < m_0 \vee \hat m_1 < m_1) \label{eq:PE2}
\end{align}
We then have 
\begin{lem}\label{thm:PE2bound}
With $m_i$ given by (\ref{eq:mi}) and $P(E_2)$ by (\ref{eq:PE2}) we can bound
\begin{align*}
  P(E_2)&\leq 2\exp(-2n\epsilon^2)
\end{align*}
for all $p_0, p_1$.
\end{lem}
\begin{proof}
Let $S_i$ be the number of visits to state 0 in the $i$-th
training sequence; the total number of visits then is
$S=\sum_{i=1}^nS_i$, with the $S_i$ independent. Independent of $p_0$ and $p_1$ we have
$0\leq S_i \leq l-1$. We can write
\begin{align}
  P(E_2) & = P(S < (\pi_0 -\epsilon)(m-n)
    \vee m-l-S < (1-\pi_0 -\epsilon)(m-n)) \nonumber\\
  &=  P(S < (\pi_0 -\epsilon)(m-n))+P((\pi_0+\epsilon)(m-n)<S)
  \label{eq:PE2bound}
\end{align}
Since we know the range of the $S_i$ and they are independent,
we can conveniently use Hoefding's inequality \cite{GrimmettBook},
\begin{align*}
  P(S < (\pi_0 -\epsilon)(m-n))
  &= P\left(\frac 1 n(S-\pi_0(m-n)) < -\epsilon (l-1)\right) \nonumber\\
  &\leq \exp\left(-2\frac{n\epsilon^2(l-1)^2}{(l-1)^2}\right)
\end{align*}
The other probability in (\ref{eq:PE2bound}) is bound similarly.
\end{proof}
%We therefore only need to find $P(E_2)$ in the limit
%$p_0\to 0$. In this limit we can find the exact distribution
%of $\hat m_i$,
%\begin{align*}
%  \hat m_0 &= (l-1)X,\quad X\sim B(n,\pi_0) \nonumber\\
%  \hat m_1 &= (l-1)(n-X)
%\end{align*}
%so that
%\begin{align*}
%  P(E_2) & = P((l-1)X < (\pi_0 -\epsilon)m
%    \vee (l-1)(n-X) < (1-\pi_0 -\epsilon)m) \nonumber\\
%  &= P\left(\frac{l-1}{m}X < \pi_0 -\epsilon
%    \vee (l-1)(n-X) < (1-\pi_0) m-\epsilon\right) 
%\end{align*}
%Hoeffding's inequality gives the following bound
%\begin{align*}
%  P(E_2)&\leq \exp(-2n\epsilon^2)
%\end{align*}

\begin{thm}\label{thm:MarkovAverage}
Consider a binary Markov chain. Assume that the training consists of a set
of sequences, so that both the size of the set and the length of the sequences 
approach infinity. For the estimator (\ref{eq:addestimator}),
with $\alpha=\alpha_0$ (\ref{eq:alpha0}) we get
\begin{align}
  R_l^+(m) &=  \frac{2\alpha_0}{m\ln 2}+o\left(\frac 1 m\right) 
\end{align}
while a lower bound is
\begin{align}
  R_l^+(m) &\geq  \frac{1}{m\ln 2}+o\left(\frac 1 m\right) 
\end{align}
\end{thm}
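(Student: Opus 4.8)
The plan is to reduce the Markov problem to two \emph{independent} copies of the IID estimation problem of Theorem~\ref{thm:iid_binary}, one per state, and to exploit the exact identity $\pi_i/m_i = 1/m$ (with $m_i\approx\pi_i m$ the number of visits to state $i$): the per‑state contribution $\pi_i\,E[D(p_i\|\hat p_i)]$ behaves like $\pi_i\cdot\frac{c}{m_i\ln 2}=\frac{c}{m\ln 2}$, so the weight $\pi_i$ cancels and each state contributes an equal share; summing the two shares produces the factor $2$ in front of $\alpha_0$ (achievability) and in front of $\tfrac12$ (converse). The enabling structural fact is that, conditioned on state $i$ being visited exactly $m_i$ times, the ``stay/leave'' indicators at those visits form an i.i.d.\ $\mathrm{Bernoulli}(p_i)$ sequence and the two states are conditionally independent; hence $k_{ii}\sim\mathrm{Binomial}(m_i,p_i)$ and the estimator (\ref{eq:addMarkov}) is \emph{exactly} the additive IID estimator (\ref{eq:addestimator}) on $m_i$ samples. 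Thus $E[D(p_i\|\hat p_i)\mid m_i]$ is precisely the IID redundancy for $m_i$ training symbols, to which (\ref{eq:RlIID})--(\ref{eq:RlIIDexact}) apply verbatim.

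For achievability I would run the genie‑inhibited estimator with $m_i$ as in (\ref{eq:mi}). Since it always sees exactly $m_i$ clean $\mathrm{Bernoulli}(p_i)$ samples (genuine ones, topped up by the genie on invalid sequences), its conditional state‑$i$ redundancy is at most $\frac{\alpha_0}{m_i\ln 2}+o(\frac{1}{m_i})$ uniformly in $p_i$ by the IID upper bound with $\alpha=\alpha_0$. Weighting by $\pi_i$ and inserting $m_i=(\pi_i-\epsilon)(m-n)$,
\begin{align*}
  \sum_{i}\pi_i\,\frac{\alpha_0}{(\pi_i-\epsilon)(m-n)\ln 2}\;\longrightarrow\;\frac{2\alpha_0}{m\ln 2}
\end{align*}
as $\epsilon\to0$ and $n/m\to0$; because $\pi_i$ cancels against $m_i$ in each term, the supremum over $(p_0,p_1)$ reduces to separately maximizing each state's IID redundancy, attained simultaneously, so the bound is \emph{tight} for this estimator and equality holds. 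It then remains to pass from the inhibited to the realizable estimator: on $E_2^{c}$ the realizable estimator uses at least $m_i$ genuine samples and is no worse, while on $E_2$ the additive form forces $\hat p_i\in[\frac{\alpha}{\hat m_i+2\alpha},\,\frac{\hat m_i+\alpha}{\hat m_i+2\alpha}]$ so the redundancy is $O(\log m)$; by Lemma~\ref{thm:PE2bound} the penalty is at most $O(\log m)\cdot 2\exp(-2n\epsilon^2)=o(\frac1m)$ once $\epsilon\to0$ slowly enough that $n\epsilon^2/\log m\to\infty$.

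For the converse I would hand the estimator the genie that provides exactly $m_0,m_1$ clean samples with $m_0+m_1=m$ and even lets it choose the split; this can only help, so the resulting minimax redundancy lower‑bounds that of any realizable estimator. Fixing the counts decouples the two states into independent IID problems, to which the per‑state minimax converse $\frac{1}{2m_i\ln 2}$ of (\ref{eq:RlIID}) applies, giving the bound $\big(\sqrt{\pi_0}+\sqrt{\pi_1}\big)^2/(2m\ln 2)$ after the estimator balances the split. Nature then maximizes over $(p_0,p_1)$: the maximum $2$ of $(\sqrt{\pi_0}+\sqrt{\pi_1})^2$ is attained at the symmetric point $\pi_0=\pi_1=\tfrac12$ (balanced split $m_0=m_1=\tfrac m2$), so each state contributes $\pi_i\cdot\frac{1}{2(m/2)\ln 2}=\frac{1}{2m\ln 2}$ and the two add to $\frac{1}{m\ln 2}+o(\frac1m)$. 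Here the per‑state bound must be invoked in its minimax form over a local prior on $p_i$ in the $O(1/\sqrt m)$ neighborhood that governs the risk---exactly as in the converse underlying (\ref{eq:RlIID}), via van Trees / Cram\'er--Rao---since at a single point the pointwise‑optimal estimator is degenerate; $\pi_i$ is essentially constant over this neighborhood and may be pulled out.

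The main obstacle I anticipate is not the per‑state asymptotics, which are inherited from Theorem~\ref{thm:iid_binary}, but the control of the randomness in the visit counts: coordinating the three limits $n\to\infty$, $l\to\infty$, $\epsilon\to0$ so that \emph{simultaneously} the bias from $m_i=(\pi_i-\epsilon)(m-n)$ vanishes and the invalid‑sequence correction $m\,P(E_2)\to0$, together with a clean justification that the realizable estimator is no worse than the genie‑inhibited one on $E_2^{c}$, i.e.\ monotonicity of the IID redundancy in the number of samples. For the converse, the analogous delicate point is certifying that the conditional‑independence decoupling is lossless and that the local‑prior minimax bound survives the weighting by the parameter‑dependent $\pi_i$.
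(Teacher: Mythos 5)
Your proposal is correct and follows essentially the same route as the paper's proof: a genie-inhibited (achievability) and genie-aided (converse) reduction to two independent IID estimation problems with deterministic per-state sample counts $m_i\approx \pi_i m$, Krichevskiy's IID bounds applied per state with the weight $\pi_i$ cancelling against $m_i$ to produce the factor $2$, and Lemma~\ref{thm:PE2bound} to make the invalid-sequence event negligible. The only cosmetic differences are that you bound the $E_2$ penalty by $O(\log m)$ where the paper uses a constant fallback penalty, and your converse optimizes in maximin order (giving $\sup_{\pi}(\sqrt{\pi_0}+\sqrt{\pi_1})^2=2$) whereas the paper computes $\inf_{\hat\pi_0}\sup_{\pi_0}\left(\frac{\pi_0}{\hat\pi_0}+\frac{1-\pi_0}{1-\hat\pi_0}\right)$ directly---both valid and yielding the same constant.
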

\begin{proof}
We use the genie-inhibited estimator, so that
the estimator always uses exactly $m_i$ samples to
estimate $p_i$. Whenever $E_2$ happens, we add a penalty of
$1$ to the codelength and thereby get an upper bound. So,
\begin{align}
	\lim_{m\to\infty} m R_l^+(m) &=\lim_{m\to\infty}m\sup_{p_0,p_1} E\left[\pi_0D(p_0\|\hat p_0)+\pi_1D(p_1\|\hat p_1)\right] \nonumber \\
	&\leq \pi_0\lim_{m\to\infty}\sup_{p_0} \frac{m}{m_0}m_0 E\left[D(p_0\|\hat p_0)\right] \nonumber \\
	&+\pi_1\lim_{m\to\infty}\sup_{p_1} \frac{m}{m_1}m_1 E\left[D(p_1\|\hat p_1)\right] \nonumber\\
	&+\lim_{m\to\infty}mP(E_2) \nonumber\\
	&\leq \lim_{m\to\infty}\frac{\pi_0}{(\pi_0-\epsilon)\left(1-\frac 1 l\right)}\sup_{p_0}  E\left[D(p_0\|\hat p_0)\right] \nonumber \\
	&+\pi_1\lim_{m\to\infty}\frac{\pi_1}{(\pi_1-\epsilon)\left(1-\frac 1 l\right)}\sup_{p_1}  E\left[D(p_1\|\hat p_1)\right] \nonumber\\
	&+\lim_{m\to\infty}2m\exp(-2n\epsilon^2) \label{eq:limmR} \\
	&=2\alpha_0
\end{align}
The condition for the first two terms in (\ref{eq:limmR}) to
converge to $\alpha_0$ is just that $l\to\infty$ and
$\epsilon\to 0$; there is no requirement on the
rate of convergence. The condition for the last
term to converge to zero is just that $n\to\infty$ and
that $\epsilon$ does not converge to zero too fast. We
can always choose a suitable $\epsilon$ to satisfy this.

For the converse, we use the idealized estimator
\begin{align}
	\lim_{m\to\infty} m R(m) &=\lim_{m\to\infty}\inf_{m_0}m\sup_{p_0,p_1} E\left[\pi_0D(p_0\|\hat p_0)+\pi_1D(p_1\|\hat p_1)\right] \nonumber \\
	&\geq \inf_{m_0}\sup_{\pi_0}\left\{\lim_{m\to\infty}\sup_{p_0}\pi_0\frac{m}{m_0} m_0 E\left[D(p_0\|\hat p_0)\right]\right. \nonumber \\
	&+\left.\lim_{m\to\infty}\sup_{p_1}\pi_1\frac{m}{m_1} m_1 E\left[D(p_1\|\hat p_1)\right]\right\} \nonumber\\
	&\geq \frac 1 2\inf_{\hat \pi_0}\sup_{\pi_0}\frac{\pi_0}{\hat\pi_0}
	+\frac{1-\pi_0}{1-\hat\pi_0} \label{eq:pi0}
\end{align}
where $\hat p_0$ stands for $\frac{m_0}{m}$. The last
inequality is due to \cite[Theorem 2]{Krichevskiy98}.
The minimax
problem in (\ref{eq:pi0}) is easily solved through differentiation,
with the result that the minimax solution is $\hat\pi_0=\pi_0=\frac 1 2$, which gives $\lim_{m\to\infty} m R(m)\geq 1$.
\end{proof}
The result could potentially be improved by considering
the modified estimator from \cite{BraessSauer04} that resulted in
(\ref{eq:RlIIDexact}).

\begin{thm}\label{thm:MarkovPe}
Consider a binary Markov chain. Assume that the training consists of a set
of sequences, so that both the size of the set and the length of the sequences 
approach infinity. 
	Using the estimator from Theorem \ref{thm:iid_binary}, the
following decay is achievable
\begin{align}
  a(m,P_e) &= 2b(P_e)\frac{Q^{-1}((1-\sqrt{1-P_e})/2)^2}{2m\ln 2}+o\left(\frac 1 m\right) \label{eq:amMarkov}
\end{align}
This is achievable for
\begin{align*}
 \frac 1 6 Q^{-1}((1-\sqrt{1-P_e})/2)^2 -1\leq \alpha \leq \frac 1 6 Q^{-1}((1-\sqrt{1-P_e})/2)^2 +1
\end{align*}
\end{thm}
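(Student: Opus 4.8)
The plan is to reduce the Markov problem to two essentially independent copies of the IID problem solved in Theorem \ref{thm:iid_binary}, one per state. I would work throughout with the genie-inhibited estimator (\ref{eq:addMarkov}), so that state $i$ is estimated from exactly $m_i=(\pi_i-\epsilon)(m-n)$ visits and the associated counts are clean binomials. The structural fact I would lean on is that, conditioned on the visit counts, the stay/leave outcomes at the visits to state $0$ and those at the visits to state $1$ are mutually independent Bernoulli trials (by the Markov property each transition is an independent coin depending only on the current state). Hence $k_{00}\sim\mathrm{Bin}(m_0,p_0)$ and $k_{11}\sim\mathrm{Bin}(m_1,p_1)$ are independent, and so are $\hat p_0,\hat p_1$. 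Keeping the genie generating exactly $m_i$ correctly distributed samples \emph{unconditionally} (so the independence holds without conditioning, and the invalid flag $E_2$ stays orthogonal) makes this independence clean. It is precisely this independence that will produce the $\sqrt{1-P_e}$ in the statement rather than the weaker $P_e/2$ of a naive union bound.

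Next I would split the error budget. Writing the redundancy as $\pi_0D(p_0\|\hat p_0)+\pi_1D(p_1\|\hat p_1)$ and choosing $a=a_0+a_1$, the error event is contained in $\{\pi_0D(p_0\|\hat p_0)\geq a_0\}\cup\{\pi_1D(p_1\|\hat p_1)\geq a_1\}$, so by independence
\begin{align*}
  P\left(\pi_0D(p_0\|\hat p_0)+\pi_1D(p_1\|\hat p_1)\geq a\right)\leq 1-\prod_{i=0}^{1}\left(1-P\left(\pi_iD(p_i\|\hat p_i)\geq a_i\right)\right).
\end{align*}
Imposing the symmetric per-state constraint $P(\pi_iD(p_i\|\hat p_i)\geq a_i)\leq P_e'$ forces $1-(1-P_e')^2\leq P_e$, i.e.\ the effective per-state error probability $P_e'=1-\sqrt{1-P_e}$.

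I would then invoke Theorem \ref{thm:iid_binary} on each state. Since $\pi_iD(p_i\|\hat p_i)\geq a_i$ is the event $D(p_i\|\hat p_i)\geq a_i/\pi_i$, the IID result (with $m_i$ samples, error $P_e'$, and the same additive estimator) gives the smallest admissible threshold $a_i/\pi_i=b(P_e')Q^{-1}(P_e'/2)^2/(2m_i\ln 2)+o(1/m_i)$. The decisive simplification is the cancellation of $\pi_i$: since $m_i=(\pi_i-\epsilon)(m-n)$ with $n/m=1/l\to 0$, we have $m_i/m\to\pi_i$, whence $a_i=\pi_i b(P_e')Q^{-1}(P_e'/2)^2/(2m_i\ln 2)\to b(P_e')Q^{-1}(P_e'/2)^2/(2m\ln 2)$ independently of $\pi_i$. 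Summing the two equal contributions yields $a=2b(P_e')Q^{-1}(P_e'/2)^2/(2m\ln 2)$ with $P_e'=1-\sqrt{1-P_e}$, which is (\ref{eq:amMarkov}); and because the same $\alpha$ is optimal for both states, the admissible range for $\alpha$ is the range of Theorem \ref{thm:iid_binary} with $P_e$ replaced by $P_e'$.

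Finally I would absorb the genie overhead. The realizable estimator uses the actual visit counts $\hat m_i\geq m_i$ on valid sequences and is therefore at least as good as the genie-inhibited one there, while its contribution on $E_2$ is bounded crudely by $P(E_2)$; by Lemma \ref{thm:PE2bound}, $P(E_2)\leq 2\exp(-2n\epsilon^2)\to 0$ as $n\to\infty$. Replacing the constraint by $1-(1-P_e')^2\leq P_e-P(E_2)$ perturbs $P_e'$ only by a vanishing amount, swallowed by the $o(1/m)$ term. The main obstacle I anticipate is making the per-state reduction uniform over the \emph{joint} supremum over $(p_0,p_1)$: the IID theorem supplies the worst case over each $p_i$ separately, and the $\pi_i$-cancellation removes the dependence of $a_i$ on the stationary weights, but one must verify that this combined bound is genuinely uniform and that the asserted domination of the realizable estimator by the genie-inhibited one holds in the tail-probability sense needed to conclude $E(m,a)\leq P_e$.
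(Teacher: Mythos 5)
Your proposal is correct and follows essentially the same route as the paper's own proof: the genie-inhibited estimator with $m_i=(\pi_i-\epsilon)(m-n)$, the independence of $\hat p_0$ and $\hat p_1$, an even split of the budget $a$ between the two states, the resulting per-state error level $1-\sqrt{1-P_e}$ (perturbed by $P(E_2)$), the per-state invocation of Theorem \ref{thm:iid_binary} with the cancellation $m_i/m\to\pi_i$, and Lemma \ref{thm:PE2bound} with $\epsilon\to 0$ sufficiently slowly. The uniformity concern you raise at the end is handled exactly as you suggest, since Theorem \ref{thm:iid_binary} already gives its bound as a supremum over each $p_i$.
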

\begin{proof}
%As in the proof of Theorem \ref{thm:iid_binary}, we consider
%convergent sequences $(p_0(m),p_1(m))$. From this we
%also have a limit $\pi_i=\lim_{m\to\infty}\pi_i(m)$.
%If $\pi_0=0$, only the term corresponding to $\pi_1$ in
%(\ref{eq:EmaMarkov}) matters; it is therefore reduced
%to the iid case, for which from Theorem \ref{thm:iid_binary}
%we get $a(m,P_e) = b(P_e)\frac{Q^{-1}(P_e/2)^2}{2m\ln 2}+o\left(\frac 1 m\right)$. We therefore concentrate on the case
%$\pi_i>0$.
%
%The training consists of $n$ sequences of length $l$ so that
%$nl=m$. Each training sequence starts in a state
%according to the stationary distribution. 
%For a given $m$ fix $m_0$ and $m_1$. The estimator takes
%the first $m_i$ pairs $ix$, $x\in\{0,1\}$ and uses (\ref{eq:addestimator})
%to estimate
%\begin{align}
%  \hat p_i &= \frac{k_{i1}+\alpha}{m_i+2\alpha}
%\end{align}
%where $k_{i1}$ are the number of pairs $i1$. Equivalently,
%$m_i$ is the desired number of visits to state $i$.
%
%Of course it is possible that a given training data does
%not have $m_0$ or $m_1$ pairs. For reasons of symmetry,
%a genie then generates more training data with
%the correct distribution until $m_0$ and $m_1$ 
%pairs have been obtained. We define the events
Define the event
\begin{align*}
E_1&=\{\pi_0D(p_0\|\hat p_0)+\pi_1D(p_1\|\hat p_1)\geq a\} 
\end{align*}
While $E_2$ is still the event that the genie
flags a sequence as invalid.
These events are not independent, but we can upper bound
the probability of training error by
\begin{align*}
  E(m,a)&\leq \sup_{p_0,p_1}(P(E_1)+P(E_2))
\end{align*}
Of course, the addition of extra artificial data can
decrease $P(E_1)$, but whenever artificial data is added $E_2$ happens,
and the total error probability is not decreased.
Here
\begin{align*}
P(E_1) &=	1-P(\pi_0D(p_0\|\hat p_0)+\pi_1D(p_1\|\hat p_1)\leq a) \nonumber\\
&\leq 1-P\left(D(p_0\|\hat p_0)\leq \frac a{2\pi_0} \wedge D(p_1\|\hat p_1)\leq \frac a{2\pi_1} \right) \nonumber \\
&= 1-P\left(D(p_0\|\hat p_0)\leq \frac a{2\pi_0} \right)P \left(D(p_1\|\hat p_1)\leq \frac a{2\pi_1}\right)
\end{align*}
because the random variables $\hat p_0$ and $\hat p_1$ are
independent with the way the (augmented) data set is generated.
We now require
\begin{align}
  P\left(D(p_i\|\hat p_i)\geq \frac a{2\pi_i} \right)
  \leq 1-\sqrt{1-P_e+P(E_2)} \label{eq:MCPD}
\end{align}
Notice that by construction, $\hat p_i$ has exactly
the same distribution as $\hat p$ in Theorem \ref{thm:iid_binary},
except based on $m_i$ samples instead of $m$.
By Theorem \ref{thm:iid_binary} (\ref{eq:MCPD}) is satisfied if
\begin{align*}
  \frac a{2\pi_i}\geq \frac{b(P_e)Q^{-1}((1-\sqrt{1-P_e+P(E_2)})/2)^2}{2m_i\ln 2}+o\left(\frac 1 m\right).
\end{align*}
With $m_i$ given by (\ref{eq:mi}) and $P(E_2)$
bounded by Lemma \ref{thm:PE2bound}, as long as
both $n,l\to\infty$, we can let $\epsilon$ converge to
0 sufficiently slowly so that we get (\ref{eq:amMarkov}).
\end{proof}

\begin{thm}
	  A lower bound is
	\begin{align}
  a(m,P_e) &\geq \frac{F_{\chi^2_2}^{-1}(1-P_e)}{2m\ln 2}+o\left(\frac 1 m\right) 
\end{align}
where $F_{\chi^2_2}$ is the CDF for a $\chi^2$-distribution
with two degrees of freedom.
\end{thm}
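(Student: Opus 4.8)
The plan is to mirror the converse of Theorem~\ref{thm:iid_binary}, now applied simultaneously to the two states, and to reduce the problem to a quantile of a weighted sum of two independent $\chi^2_1$ variables. For a lower bound it suffices to evaluate $E(m,a)$ at a single convenient parameter pair, and I would take $p_0=p_1=\tfrac12$, so that $\pi_0=\pi_1=\tfrac12$. With the genie-idealized data of the converse the two estimators $\hat p_0,\hat p_1$ are built from disjoint, hence independent, blocks of $m_0$ and $m_1$ samples with $m_0+m_1=m$, each a function of its own sufficient statistic $\check p_i=k_i/m_i$. Restricting (as in the hypothesis) to such estimators, I would invoke Pinsker's inequality (\ref{eq:Pinsker}) on each term,
\[
\pi_0 D(p_0\|\hat p_0)+\pi_1 D(p_1\|\hat p_1)\;\ge\;\frac{1}{\ln2}\Big[(\tfrac12-\hat p_0)^2+(\tfrac12-\hat p_1)^2\Big],
\]
which at $p_i=\tfrac12$ is tight to second order, so no rate is lost.

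Next I would run the CLT argument of the Theorem~\ref{thm:iid_binary} converse coordinatewise. Writing $\hat p_i=f_i(\check p_i)=\check p_i+g_i(\check p_i)$ and repeating the ``$\lim_{m\to\infty}\sqrt{m_i}\,g_i=0$ is optimal'' step, the quantile-minimizing estimator in each coordinate is the identity, for which Berry--Esseen (\ref{eq:BE}) gives $\sqrt{m_i}\,(\hat p_i-\tfrac12)\Rightarrow \tfrac12 Z_i$ with $Z_0,Z_1$ independent standard normals. Hence the Pinsker bound converges in distribution to $\frac{1}{4\ln2}\big(\tfrac{Z_0^2}{m_0}+\tfrac{Z_1^2}{m_1}\big)$. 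Because this quantity lower-bounds the redundancy $R$, the requirement $\sup P(R\ge a)\le P_e$ forces $P\big(\frac{1}{4\ln2}(\tfrac{Z_0^2}{m_0}+\tfrac{Z_1^2}{m_1})\ge a\big)\le P_e+o(1)$, i.e. $a$ is at least the $(1-P_e)$-quantile of $\frac{1}{4\ln2}(\tfrac{Z_0^2}{m_0}+\tfrac{Z_1^2}{m_1})$. Since the estimator is free to choose the split $m_0+m_1=m$, the valid lower bound is the \emph{infimum} of this quantile over all allocations.

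The crux is therefore the allocation optimization: I must show the $(1-P_e)$-quantile of $\frac{Z_0^2}{m_0}+\frac{Z_1^2}{m_1}$ is minimized at the balanced split $m_0=m_1=m/2$. Writing $c_i=1/m_i$ with $1/c_0+1/c_1=m$, the tail exponent of $c_0Z_0^2+c_1Z_1^2$ is governed by the pole of its moment generating function at $t=1/(2\max(c_0,c_1))$, so the tail is lightest when $\max(c_0,c_1)$ is smallest, which under the constraint occurs exactly at $c_0=c_1=2/m$; together with the symmetry $m_0\leftrightarrow m_1$ and convexity of the quantile about $m/2$ this pins the minimizer at the balanced split. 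At that split $\frac{Z_0^2}{m_0}+\frac{Z_1^2}{m_1}=\frac{2}{m}(Z_0^2+Z_1^2)=\frac{2}{m}\chi^2_2$, so the quantile of the redundancy bound is $\frac{1}{4\ln2}\cdot\frac{2}{m}F^{-1}_{\chi^2_2}(1-P_e)=\frac{F^{-1}_{\chi^2_2}(1-P_e)}{2m\ln2}$, which is the claimed bound.

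The step I expect to be the real obstacle is proving that the balanced allocation is the \emph{global} minimizer of the quantile for every fixed $P_e$ (not merely in the small-$P_e$ large-deviation limit); the tail-exponent heuristic gives the right answer and symmetry gives a critical point at $m/2$, but a clean monotonicity/convexity argument for $s_{P_e}(u)$ with $u=m_0/m$ away from $u=\tfrac12$ is what makes the ``$\inf$ over allocations equals the balanced value'' claim rigorous. A minor additional point is to confirm that $(p_0,p_1)=(\tfrac12,\tfrac12)$ is an admissible worst case; but since any single parameter choice yields a valid lower bound, this only affects tightness against the achievability (\ref{eq:amMarkov}), not the validity of the bound.
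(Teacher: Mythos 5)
Your chain of steps (Pinsker, coordinatewise CLT, chi-squared quantile, balanced allocation) is modeled on the paper's own outline, but there is a genuine gap at the very first move: you cannot reduce to the single parameter point $p_0=p_1=\tfrac12$ and \emph{afterwards} invoke the ``$\lim_{m\to\infty}\sqrt{m}\,g_m=0$ is optimal'' step. Once the parameter is fixed, that step is false. Consider the invertible estimator $f_i(\check p_i)=\tfrac12+\epsilon_m\left(\check p_i-\tfrac12\right)$ with $\epsilon_m\to 0$: it is a function of the sufficient statistic, yet it forces $\hat p_i\to\tfrac12$, so at $(p_0,p_1)=(\tfrac12,\tfrac12)$ the redundancy is $o(1/m)$ with probability one and the single-point requirement $P_{(1/2,1/2)}\left(\pi_0D(p_0\|\hat p_0)+\pi_1D(p_1\|\hat p_1)\geq a\right)\leq P_e$ holds for \emph{every} $a>0$. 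Hence a lower bound extracted from one parameter point is vacuous ($a=0$), and your closing remark that the choice of $(\tfrac12,\tfrac12)$ ``only affects tightness, not validity'' is exactly what fails. The optimality of the identity estimator in the converse of Theorem~\ref{thm:iid_binary} is inherently a \emph{minimax} statement: its proof perturbs the parameter to a nearby $p_m$ at distance $O(1/\sqrt m)$ and uses the supremum over $p$ to show that any limiting bias $b$ of $\sqrt m g_m$ that helps at $p$ hurts at $p_m$. The paper's Markov converse keeps $\sup_{p_0,p_1}$ alive for precisely this reason: it first argues (via the shrinking ellipsoid $S_a$ and the 2D Gaussian limit) that the limit point $b$ of $\sqrt m g_m(S_a)$ must be $0$ to survive the worst case over parameters, and only \emph{then} identifies the worst case as $p_0=p_1=\tfrac12$ with best allocation $m_i=\pi_i m$. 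The structural fix to your argument is to run the coordinatewise analysis against a local family of parameters around $(\tfrac12,\tfrac12)$ rather than at a single point.

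On the step you single out as the real obstacle --- that the $(1-P_e)$-quantile of $Z_0^2/m_0+Z_1^2/m_1$ under $m_0+m_1=m$ is minimized at the balanced split --- you are actually no worse off than the paper, whose outline simply asserts ``the best case is $m_i=\pi_i m$'' without proof; your MGF-pole heuristic plus symmetry is a reasonable sketch of why this holds, and a complete proof should indeed justify it for every fixed $P_e$. But this is a secondary issue of rigor shared with the paper; the missing minimax structure described above is the flaw that invalidates the proposal as written.
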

\begin{proof}
The proof follows closely the proof of the lower bound
in Theorem \ref{thm:iid_binary}, and we only
provide the outline here. We do the lower bound
for the CLT regime, and we
use (\ref{eq:Pinsker}) to lower bound relative entropy,
\begin{align*}
  \pi_0D(p_0\|\hat p_0)+\pi_1D(p_1\|\hat p_1)
  \geq \frac 2{\ln 2}\left(\pi_0(\hat p_0-p_0)^2+\pi_1(\hat p_1-p_1)^2\right)
\end{align*}
We have $\hat p_i=f_i(\check p_0)$, which we assume
is invertible; we write $\check p_i=f^{-1}(\hat p_i)
=\hat p_i+g_{i,m}(\hat p_i)$. Let
\begin{align*}
  S_a =\left\{(\hat p_0,\hat p_1): \frac 2{\ln 2}\left(\pi_0(\hat p_0-p_0)^2+\pi_1(\hat p_1-p_1)^2\right)\leq a\right\}
\end{align*}
which is an ellipsoid centered at $(p_0,p_1)$. Consider the
set of points satisfying
\begin{align*}
  \left(\sqrt{m}(\check p_0-p_0),\sqrt{m}(\check p_1-p_1)\right)
  \in \sqrt{m}(S_a-(p_0,p_1))+\sqrt{m}g_m(S_a)
\end{align*}
 Here $\sqrt{m}(\check p_i-p_i)$ converge to independent Gaussians
 by the central limit theorem. To achieve a given
 $P_e$ we must then have a decrease as $\frac 1 m$, and
 $S_a$ is a shrinking ellipsoid. As in the proof of
 Theorem \ref{thm:iid_binary} we see that $\sqrt{m}g_m(S_a)$
 converges to a single point, $b$. The goal is to minimize
 $a$ while still having $P(S_a)\geq 1-P_e$. It is easily seen
 that this is achieved for $b=0$. Finally the worst
 case over $p_0,p_1$ is for $p_0=p_1=\frac 1 2$ 
 while the best case is $m_i=\pi_i m$. We end up
 with analyzing the symmetric case,
$\check p_i-p_i\sim \mathcal{N}\left(0,\frac 1 {2m}\right)$,
so that
\begin{align*}
 2m\ln 2 \frac 2{\ln 2}\left(\frac 1 2(\hat p_0-p_0)^2+\frac 1 2(\hat p_1-p_1)^2\right) \sim \chi^2_2
\end{align*}
and
\begin{align*}
  a(m,P_e)\geq \frac{F_{\chi^2_2}^{-1}(1-P_e)}{2m\ln 2}
\end{align*}
\end{proof}
As opposed to the IID case, Theorem \ref{thm:iid_binary}, the
upper and lower bounds are not tight as $P_e\to 0$. There
is a factor about 2 between the bounds. Both bounds
require evaluations of the probability of
a set $\{\hat p_0, \hat p_1:\pi_0D(p_0\|\hat p_0)+\pi_1D(p_1\|\hat p_1)\leq a\}$.
The achievable bound bounds this by the probability of
a square, while the converse uses a circle, and the gap
is due to this difference. Fig. \ref{fig:Markov} shows
the different bounds. The achievable bounds indicates
that the Markov chain requires more than twice as
many training samples as the iid case; but the lower bound
shows that hardly any increase in samples is needed (though
at least it proves that some more samples are required).

However, our bottom-line comparison was with universal
source coding. The redundancy of universal source
coding of a Markov chain with 2 states is about 
$R_l^+\approx \frac{\log l}{l}$ \cite{Rissanen86b}, a factor
2 increase over IID sources. For the achievable rate we also
have about a factor 2 increase, and therefore approximately
\begin{align*}
  m & \geq  \frac{Q^{-1}(P_e/2)^2}{2\ln 2}\frac l{\log l}
\end{align*}
the same as (\ref{eq:mboundMarkov}). Thus no more
samples are required than for the IID case. On the other hand,
if we go with the lower bound, the factor for training is closer to 1, and
the conclusion is therefore that only half the number of samples
is required to beat universal coding for Markov chains compared
to the IID case: it is even easier to learn Markov chains than
IID sources! The author does not really have a conjecture
on whether the upper or lower bound is tighter. 

\begin{figure}[hbt]
\center\includegraphics[width=3.5in]{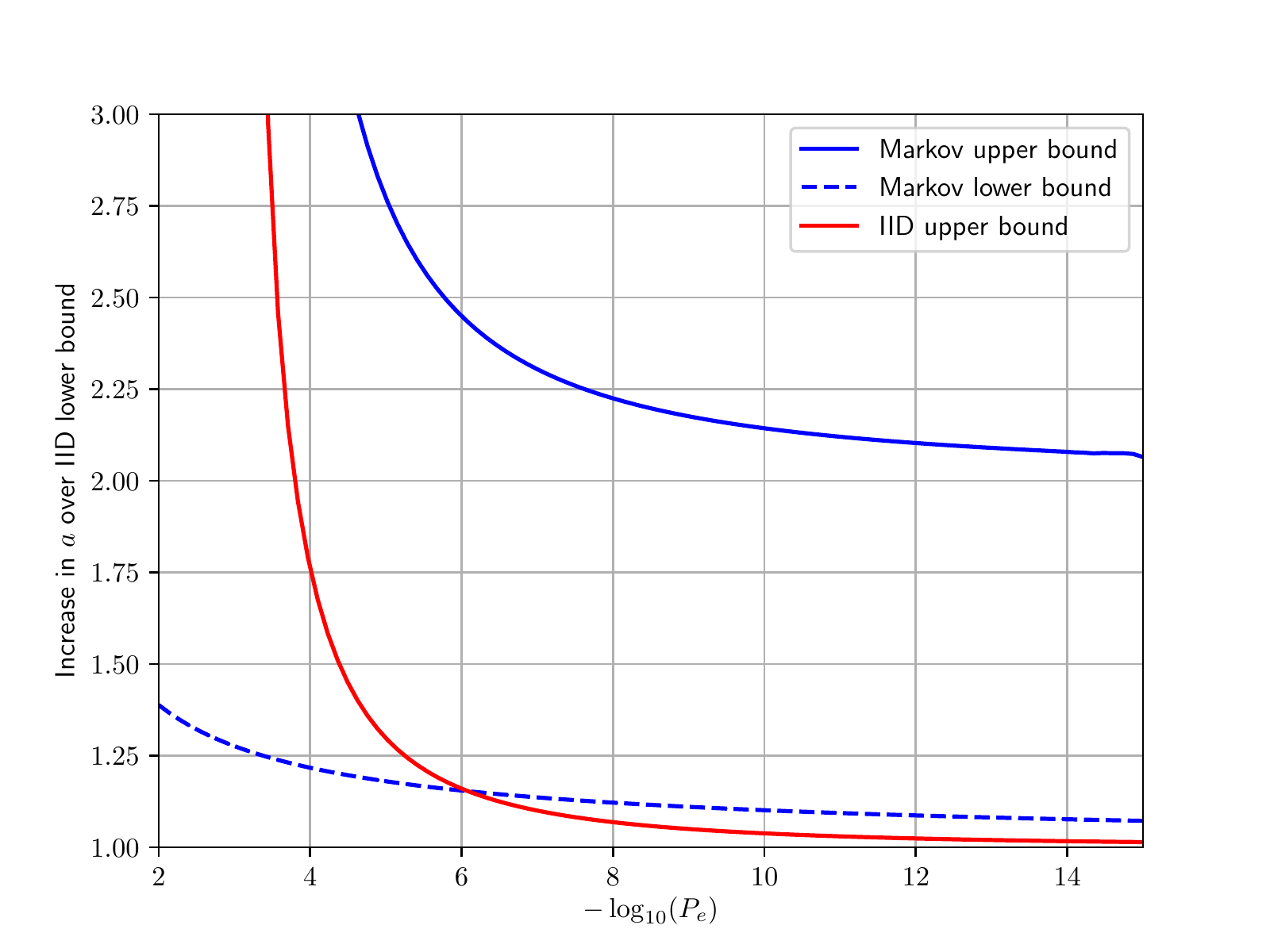}
  \caption{\label{fig:Markov}Plot of the gap
  between the IID lower bound and the Markov bounds.}
\end{figure}

Finally, the condition for achievability in both Theorems
\ref{thm:MarkovAverage} and \ref{thm:MarkovPe} is just
that both the number of training sequences $n$ and their length $l$
approach infinity,
but it does not matter how. One would think that there would
be an optimum relationship between $n$ and $l$, but obviously
that does not affect the performance to the first order.

\section{Conclusions }
The central question of this paper can be thought of as: how
much training is required to beat universal source coding? The
answer for both IID sources and Markov chains is: not many.
To code a sequence of length $l$ the number of training
samples is proportional to $\frac{l}{\log l}$. 
This optimistic conclusion is totally opposite to the pessimistic
conclusion of \cite{HershkovitsZiv97}. The reason is due to
the viewpoint -- and perhaps that
we so far only consider very simple sources. While \cite{HershkovitsZiv97} focuses
on approaching entropy rate, we just want to beat the
redundancy of universal source coding. Additionally, \cite{HershkovitsZiv97}
considers learning to be that of building a dictionary,
inspired by Lempel-Ziv coding \cite{ZivLempel77,ZivLempel78}.
However, the exceptional performance of modern machine
learning can be seen as being achieved through learning
soft information. This is also how we approach
learning here, and the coding is more similar to the
CTW algorithm of Willems \cite{WillemsAl95}.

The paper
also shows that it is essential to have multiple
training sequences for Markov chains, which is different
from universal prediction.

Natural generalization of the result is to larger alphabet
sizes and Markov chains with more than two states. The most
interesting generalization is to finite state machines (FSM) as
in
\cite{Rissanen86b}. FSM are much more realistic of real-world
sources, and potentially they might not be so easy to learn.

Another interesting generalization would
be to the large deviations regime. In this case $a$
would be fixed and the question then is how $P_e$ decreases
with $m$. This seem to be more difficult than the moderate
deviations considered here. One thing we can already predict from
the results here is that in the large deviation regime the
"add $\alpha$" estimator (\ref{eq:addestimator}) is no longer
suitable. Namely, from (\ref{eq:alpha_range}) it can
be seen that the optimum $\alpha$ depends on $P_e$,
which again depends on $m$ for large deviations. Thus, $\alpha$
must also be a function of $m$. In fact, from some initial
numerical observations, it seems the estimator (\ref{eq:addbeta})
with $\beta$ constant works much better.

\bibliographystyle{IEEEtran}
\bibliography{Coop06,ahmref2,Coop03,BigData,combined,ECGandHRV}

%\appendices{}
%
%\section{First appendix}
%
%a
\end{document}